\newtheorem{definition}{Definition}
\newcommand\vldbdoi{10.14778/3594512.3594520}
\newcommand\vldbpages{1897 - 1909}
\newcommand\vldbvolume{16}
\newcommand\vldbissue{8}
\newcommand\vldbyear{2023}
\newcommand\vldbauthors{\authors}
\newcommand\vldbtitle{\shorttitle}
\newcommand\vldbavailabilityurl{https://github.com/zealscott/LDPTrace}
\newcommand\vldbpagestyle{empty}
\newcommand{\bigo}{\mathcal{O}}
\newcommand{\lapl}{\mathcal{L}}
\newcommand{\Space}[1]{\mathbb{#1}}
\newcommand{\Set}[1]{\mathcal{#1}}
\newcommand{\ie}{\emph{i.e., }}
\newcommand{\eg}{\emph{e.g., }}
\newcommand{\etc}{\emph{etc.}}
\newcommand{\wrt}{\emph{w.r.t. }}
\newcommand{\aka}{\emph{aka. }}
\newcommand{\argmin}{\mathop{\mathrm{argmin}}}
\newcommand{\mean}{\mathop{\mathrm{E}}}
\newcommand{\var}{\mathop{\mathrm{Var}}}
\newcommand{\error}{\mathop{\mathrm{Error}}}
\newcommand{\mymethod}{\ensuremath{\mathsf{LDPTrace}}\xspace}
\newcommand{\baseline}{\ensuremath{\mathsf{NGRAM}}\xspace}
\newcommand{\randsyn}{\ensuremath{\mathsf{RandSyn}}\xspace}
\newcommand{\combtran}{\ensuremath{\mathsf{CombTran}}\xspace}
\newcommand{\noadapt}{\ensuremath{\mathsf{NoAdapt}}\xspace}
\begin{document}
\title{\mymethod: Locally Differentially Private Trajectory Synthesis}

\author{Yuntao Du}
\affiliation{%
  \institution{Purdue University}
}
\email{ytdu@purdue.edu}

\author{Yujia Hu}
\affiliation{%
  \institution{Zhejiang University}
}
\email{charliehu@zju.edu.cn}


\author{Zhikun Zhang}
\affiliation{%
  \institution{CISPA}
}
\email{zhikun.zhang@cispa.de}

\author{Ziquan Fang}
\author{Lu Chen}
\affiliation{%
  \institution{Zhejiang University}
}
\email{{zqfang,luchen}@zju.edu.cn}


\author{Baihua Zheng}
\affiliation{%
    \institution{Singapore Management University}
}
\email{bhzheng@smu.edu.sg}

\author{Yunjun Gao}
\affiliation{%
  \institution{Zhejiang University}
}
\email{gaoyj@zju.edu.cn}


\begin{abstract}
Trajectory data has the potential to greatly benefit a wide-range of real-world applications, such as tracking the spread of the disease through people's movement patterns and providing personalized location-based services based on travel preference. However, privacy concerns and data protection regulations have limited the extent to which this data is shared and utilized. To overcome this challenge, local differential privacy provides a solution by allowing people to share a perturbed version of their data, ensuring privacy as only the data owners have access to the original information.
Despite its potential, existing point-based perturbation mechanisms are not suitable for real-world scenarios 
due to poor utility, dependence on external knowledge, high computational overhead, and vulnerability to attacks. To address these limitations, we introduce \mymethod, a novel locally differentially private trajectory synthesis framework. Our framework takes into account three crucial patterns inferred from users' trajectories in the local setting, allowing us to synthesize trajectories that closely resemble real ones with minimal computational cost. 
Additionally, we present a new method for
selecting a proper grid granularity without compromising privacy. Our extensive experiments using real-world data, various utility metrics and attacks, demonstrate the efficacy and efficiency of \mymethod.
\end{abstract}

\maketitle

\pagestyle{\vldbpagestyle}
\begingroup\small\noindent\raggedright\textbf{PVLDB Reference Format:}\\
\vldbauthors. \vldbtitle. PVLDB, \vldbvolume(\vldbissue): \vldbpages, \vldbyear.\\
\href{https://doi.org/\vldbdoi}{doi:\vldbdoi}
\endgroup
\begingroup
\renewcommand\thefootnote{}\footnote{\noindent
This work is licensed under the Creative Commons BY-NC-ND 4.0 International License. Visit \url{https://creativecommons.org/licenses/by-nc-nd/4.0/} to view a copy of this license. For any use beyond those covered by this license, obtain permission by emailing \href{mailto:info@vldb.org}{info@vldb.org}. Copyright is held by the owner/author(s). Publication rights licensed to the VLDB Endowment. \\
\raggedright Proceedings of the VLDB Endowment, Vol. \vldbvolume, No. \vldbissue\ %
ISSN 2150-8097. \\
\href{https://doi.org/\vldbdoi}{doi:\vldbdoi} \\
}\addtocounter{footnote}{-1}\endgroup

\ifdefempty{\vldbavailabilityurl}{}{
\vspace{.3cm}
\begingroup\small\noindent\raggedright\textbf{PVLDB Artifact Availability:}\\
The source code, data, and/or other artifacts have been made available at \url{\vldbavailabilityurl}.
\endgroup
}

\section{Introduction}


The widespread availability of location sensing technology, such as GPS, has 
revolutionized our ability to collect real-time data. As a result, there has been significant interest in studying human mobility patterns on a large scale for a variety of location-based applications, including traffic prediction, route planning, and recommendation. Despite the immense value of this data, privacy concerns surrounding the sensitive nature of trajectories have limited its use.

Differential privacy (DP) has become the \textit{de facto} standard for protecting sensitive data while ensuring individual privacy. Despite the development of various DP algorithms for trajectory publishing and analysis, these methods rely on a 
\emph{trustworthy aggregator} to collect users' raw trajectories. In contrast, local differential privacy (LDP) allows users to directly share a noisy version of their data, reducing the risk of data breaches from  
untrustworthy data curators.

The LDP provides a more practical setting and improved privacy properties, however, it imposes challenges in preserving the complex spatial patterns of trajectories due to its strict privacy requirements. 
Currently, the only solution that meets the rigorous privacy requirement of LDP is \baseline~\cite{vldb21ngram}. This method uses the exponential mechanism to directly perturb individual trajectory in the local setting and leverages external knowledge (POIs, business opening hours, \etc) and overlapped n-grams to enhance  the realism of the noisy trajectories.
However, \baseline has several major limitations that hinder its effectiveness: 

\begin{itemize}[leftmargin=*]
    \item \textbf{Poor global utility.} 
    As \baseline only focuses on local trajectory proximity for utility optimization (similarity between original and perturbed trajectories), it results in poor global utilities, as reported in Section~\ref{sec:exp-utility}. Most location-based applications rely on population-level spatial statistics (\eg range query and spatial density) and moving patterns (\eg distribution of start/end points and frequent travel patterns), rather than individual behaviors (\eg routing preference), and the failure to  preserve
    global utility significantly limits its applications.
    \item \textbf{Dependence on auxiliary knowledge.}
    The performance of \baseline heavily relies on external knowledge (\eg POI categories and business opening hours),
    which may not always be accessible and can become outdated easily. This leads to a dramatic decrease in utility and authenticity of generated trajectories. In addition, users are required to store the external data on their own devices, which is highly impractical for wearable or low-cost GPS devices with limited storage.
    \item \textbf{High computational overhead.}
    To obtain accurate perturbed trajectories, 
    \baseline uses linear programming solvers, which are time-consuming and require pre-processing to deal with external knowledge (\eg POI processing and hierarchical decomposition). This can result in significant delays for users, reducing their satisfaction with location-based services.
    \item \textbf{Vulnerable to attacks.}
    Point-based privacy mechanisms are vulnerable to location-based attacks, such as re-identification attack~\cite{nature13,ccs15temporal} and outlier leakage~\cite{ccs18adatrace,icde22gl}, due to the strong statistical correlation between the fake locations and user’s true locations. This is even more concerning for \baseline as it leverages external knowledge to maintain geographic and semantic similarities between real and perturbed points in the local setting (as to be detailed in Section~\ref{sec:exp-attack}). As a result, alternative attack-resilient approaches must be sought to address privacy concerns.
\end{itemize}

The existing shortcomings motivate us to develop a new local privacy-preserving paradigm that is both utility-aware and efficient. Instead of perturbing each trajectory individually,
we aim to extract the key movement patterns of each user and use them to synthesize privacy-preserving and realistic trajectories. However, synthesizing trajectories in the local setting is \emph{challenging} due to the following two reasons.
First, previous trajectory synthesis methods either rely on global statistical metrics~\cite{sp16sglt,ccs18adatrace} or spatial-aware data structures (\eg prefix tree~\cite{kdd12ngram,vldb15dpt}) to model the spatial distributions. However, these are not feasible in  local settings as there is no trusted data curator to collect these information and it is infeasible to directly collect the statistics from individual users, who typically only have a few trajectory footprints, leading to severely biased estimations. 
Second, existing LDP methods assume that each individual holds a single data record (\eg a single value)~\cite{usenix17oue,sigmod18ldp}, but this assumption is no longer valid
for trajectory data, which is a sequence of spatial points with timestamps and has
complex spatial context that cannot be simply modeled by the methods.

Therefore, we present \mymethod,
a simple yet effective framework for synthesizing locally differentially private and attack-resistant trajectories. \mymethod achieves the following four objectives: (i) robust, rigorous statistical privacy, (ii) flexibility and low computational cost, (iii) strong preservation of global spatial utilities and authenticity, and (iv) deterministic resilience against trajectory privacy attacks. Specifically, \mymethod approaches trajectory synthesis as a generative process, constructing a probabilistic model based on users' transition records to estimate the global moving patterns. The transition records capture the spatial relationship between adjacent points, while remaining low computational complexity. To enhance the authenticity of synthesized trajectories, \mymethod includes \textit{virtual} start and end points to each trajectory in the generation process to indicate the beginning and terminated transition states. 
Additionally, the framework estimates the trajectory length distribution for optimal transition budget allocation and deterministic generation constraints. We also provide a theoretical guideline for selecting grid granularity without consuming privacy budget. Last but not the least, an adaptive synthesis algorithm is employed to generate realistic trajectories without access to users' real trajectories.

Our synthetic trajectory generation process is locally differentially private, meaning that the global moving patterns are not strongly dependent on any specific user and the generation of synthetic trajectory is  not 
bias towards any specific trajectory. We have conducted an extensive experimental evaluation to compare \mymethod with the state-of-the-art method \baseline.
Experimental results indicate that \mymethod significantly outperforms 
the competitor in terms of data utility, efficiency, and scalability, and it  
is equipped with robust resistance against various location-based attacks for superior privacy protection.

In summary, the key contributions of our work are below.
\begin{itemize}[leftmargin=*]
    \item We propose \mymethod, the first trajectory synthesis solution with local differential privacy guarantee that is able to generate realistic trajectories without any external knowledge.
    \item We introduce a neat and effective framework that collects key moving patterns from users' trajectories with little computational cost, and devise an adaptive synthesis algorithm to generate authentic trajectories.
    \item We perform comprehensive analysis on the errors and complexity of the proposed framework, and present a guideline for selecting the grid granularity without consuming privacy budget.
    \item We conduct extensive experiments to demonstrate the superiority of \mymethod in terms of utility, efficiency, and scalability. Moreover, we show that \mymethod is able to resist various location-based attacks.
\end{itemize}

The rest of this paper is organized as follows. We review related work in Section~\ref{sec:related}, and elaborate our motivation in Section~\ref{sec:motivation}. Then, we present preliminaries in Section~\ref{sec:preliminaries}. Section~\ref{sec:methodology} details the proposed synthesis framework. The experimental results are reported in Section~\ref{sec:exp}. Finally, we conclude the paper, and offer directions for future work in Section~\ref{sec:conclusion}.

\section{Related Work}\label{sec:related}


Differential privacy (DP)~\cite{dwork_dp} has become the \textit{de facto} privacy standard. While centralized DP assumes data aggregators are reliable, local differential privacy  (LDP)~\cite{13ldp} assumes that aggregators cannot be trusted and relies on data providers to perturb their own data. Early studies~\cite{vldb12privbasis,ccs13privacy,nips17hitter,icde17privsuper,vldb18,gu2019supporting,cormode2021frequency,sigmod20ldp} on DP and LDP mostly focus on designing tailored algorithms for specific data analysis tasks, which suffers from poor flexibility, inefficiency, and scalability problems. One promising solution~\cite{usenix21privsyn,vldb21markov,vldb21kamnio,ccs18clam,YZDCCS23} to address this problem is generating a synthetic dataset that is similar to the private dataset while satisfying (local) differential privacy.
However, these methods mainly aim at structured data like tables, which cannot be applied to trajectory data due to its high dimensionality and complex spatial dependence.


The privacy of trajectory data (surveyed in~\cite{survey21,tkde22survey}) has been a significant concern for over a decade and various solutions have been proposed to address the issue.
Many existing solutions~\cite{cikm13,ccs2013geo,icde22gl,vldb22query,ccs21dp,kdd14spatio,icde17temporal} employ spatial point perturbation techniques under the constraint of DP
that add noises to the point locations of trajectory before it is published or used to answer predefined queries (\eg range query). For example, GL~\cite{icde22gl} aims to preserve both privacy and high utility by perturbing the local/global frequency distributions of important locations in a trajectory. 
SNH~\cite{vldb22query} introduces a neural database for spatial range queries and adds DP-compliant noise to the input queries to maintain the density features of location data. More recently, \baseline~\cite{vldb21ngram} has been proposed to address the privacy concerns of trajectory sharing in a local setting. 
The method includes three phases: (1) hierarchical decomposition phase, where  POIs are divided into spatial-temporal-category regions; (2) perturbation phase, where trajectories are converted to sequences of overlapping n-grams and perturbed using exponential mechanism; and (3) reconstruction phase, where \baseline solves an optimization problem to reconstruct the continuous trajectory based on the perturbed n-grams. Despite its efforts, \baseline still has several limitations, such as low global utility, high computational overhead, and vulnerability to attacks.

To protect privacy, some researchers have investigated alternative methods to point perturbation, such as generating synthetic trajectories~\cite{kdd12ngram,vldb15dpt,sp16sglt,mc18,ccs18adatrace,sstd21,gursoy2020utility,yang2022collecting}. The challenge is to create synthetic data that resembles genuine user traces while providing practical privacy protection.
One approach, DPT~\cite{vldb15dpt}, uses a hierarchical reference system to model trajectory movements at different speeds and encodes transitions between grid cells using prefix trees. By injecting Laplace noise to the prefix trees, the transition probabilities are distorted while still maintaining the movement patterns of the original traces. Further studies~\cite{sp16sglt,mc18} have extended DPT by incorporating trajectory semantics and temporal information. 
Another method, AdaTrace~\cite{ccs18adatrace}, extracts four statistical and spatial features and incorporates them into its private synopsis, including a density-aware grid, Markov mobility model, pickup/destination and length distribution. The authors also design a synthesizer with attack resilience constraints to balance both statistical privacy (differential privacy) and syntactic privacy (attack resilience).
PrivTrace~\cite{usenix2023privtrace} applies an adaptive strategy to choose crucial first- and second-order Markov transitions for trajectory synthesis, resulting in better utilities than previous methods. Although current synthetic methods are effective, they all rely on a trustworthy data curator to aggregate useful statistics. Our work is the first to introduce an utility-aware and efficient trajectory synthesis framework without accessing users' real traces.

\section{Motivation}\label{sec:motivation}

Trajectory is a time-order sequence of location points generated from human mobility, which is highly sensitive since it can reveal people's home/work place, travel patterns, and other preferences.
The population level aggregate spatial information on where/when do residents commute could be useful for the authority to gain better understanding of residents' commuting patterns,
but many people are unwilling to share their own trajectories because of privacy concerns. Therefore, we aim to propose a method for various parties (e.g., authority/service providers) to collect useful mobility patterns from crowd without accessing individual's real trajectories. In the following, we present four critical design principles that motivate and guide our solution, including \emph{privacy protection}, \emph{global utility}, \emph{practicability and efficiency}, and \emph{attack resilience}.

\vspace{2pt}
\noindent
\textbf{Privacy protection.} The primary goal of our work is
to protect each individual's privacy so that the untrusted data curator cannot access people's real traces. We achieve this by utilizing LDP mechanism to perturb user's trajectories
before sharing the data. We detail the privacy implications of our method in Section~\ref{sec:privacy-analysis}.

\vspace{2pt}
\noindent
\textbf{Global utility.} Based on the rigorous privacy guarantee of LDP, 
our solution is expected to be designed in such a way that it can preserve the high global utility for synthetic trajectories. We argue that a feasible way to boost global utility is to extract the key moving patterns from user's trajectories, and leverage these information to guide the synthesis process. Since the synthesizer is designed to capture the intrinsic features of user's movements, the synthetic trajectories can be used for various spatial analysis tasks like range query and frequent pattern mining, instead of 
tailor-made for specific utility like~\cite{vldb21ngram}.

\vspace{2pt}
\noindent
\textbf{Practicability and efficiency.} Considering the 
wide usage of trajectory data, our solution is preferred to be as simple as possible so that it can be easily deployed to any local devices without heavy computation. Besides, it is equally critical to ensure the efficiency, thus privacy and utility do not come at the cost of user experience.

\vspace{2pt}
\noindent
\textbf{Attack resilience.} Although LDP is a strong data sharing technique with provable privacy guarantees, it still suffers from various syntactic attacks. This is especially true for trajectory privacy protection, as blindly forcing the perturbed trajectories to resemble original ones would make them vulnerable to various location-based attacks like re-identification attack and outlier leakage. Hence, the synthetic trajectories should be robust, and they are expected to be able to resist common attacks.

\vspace{2pt}
\noindent
\textbf{Applications.} Our work focuses on synthesizing trajectories such that the global aggregate statistics is preserved as much as possible, 
which is essential to many important location-based applications. A notable one is 
trajectory monitoring that identifies people's movement patterns, which can be used for policy making decisions like traffic control or disease spread forecasting (like Covid-19). Other applications include location-based services and advertising, \eg a tourism recommendation system can utilize common trajectories taken by people to recommend popular trips/destinations, and an outdoor advertising company can use people's movement patterns to better estimate the traffic flows at different locations.



\begin{table}[t]
    \centering
    \caption{Symbols and Description}
    \vspace{-10pt}
    \resizebox{0.47\textwidth}{!}{
    \begin{tabular}{cl}
    \hline
         \textbf{Notation} & \textbf{Description} \\ \hline
         $T$, $\Set{T}$  & Trajectory, and a set of trajectories \\
         $C$, $\Set{C}$  & Grid cell, and a set of grid cells \\
         $L$, $\Set{L}$  & Trajectory length, length distribution \\
         $s$, $\Set{S}$  & Intra-trajectory transition, mobility model \\
         $M$, $\Set{M}$  & Aggregated transition, aggregated mobility model \\
         $C_a$, $C_b$  & Virtual start/end point \\
         $\Set{N}$, $\Set{N}^*$ & Neighborhood cells without/with virtual end point \\
         $\hat{g}(\cdot)$, $\tilde g(\cdot)$ & Report times, unbiased estimation of frequency \\
         $\epsilon$ & Privacy budget  \\  \hline
    \end{tabular}
    }
    \vspace{-5pt}
    \label{tab:notation}
\end{table}

\section{Preliminaries}\label{sec:preliminaries}

In this section, we first introduce the definition of
local differential privacy (LDP) and then formulate our problem. Table~\ref{tab:notation} lists the notations used in this paper.

\subsection{Differential Privacy in the Local Setting}
\subsubsection{\textbf{$\epsilon$-Local Differential Privacy.}}
In the local setting of DP, there are many \textit{users} and one untrusted \textit{aggregator}, and each user perturbs the input value $x$ using an algorithm $\Psi$ and sends $\Psi(x)$ to the aggregator. The formal privacy requirement is that the algorithm $\Psi(\cdot)$ satisfies the following property:
\begin{definition}[$\epsilon$-Local Differential Privacy]
\label{def:ldp}
    An algorithm $\Psi(\cdot)$ satisfies $\epsilon$-local differential privacy ($\epsilon$-LDP), where $\epsilon \ge 0 $, if and only if for any input $x_1$, $x_2$ and output $y$:
    {\setlength{\abovedisplayskip}{3pt}
    \setlength{\belowdisplayskip}{3pt}
    \begin{equation}
        Pr[\Psi(x_1)=y]\leq e^\epsilon Pr[\Psi(x_2)=y].
    \end{equation}}
\end{definition}

The privacy budget $\epsilon$ is a metric used to measure the level of privacy protection in local differential privacy (LDP).
It represents the probability that an attacker can determine the true value of the input based on the output. The higher the privacy budget, the higher the probability that the attacker can infer the true value, and the lower the privacy protection.
In practice, the privacy budget can be set according to the privacy  requirements of the application.
For example, a smaller privacy budget may be chosen when collecting highly sensitive data such as health information, while a larger privacy budget may be used for less sensitive data such as typing patterns. A privacy budget of less than 2 is typically considered acceptable~\cite{ccs14rappor,icde22gl,nips17microsoft,usenix17oue,vldb22query,vldb21ngram,WCZSCLLJ21,DZBLJCC21}.
LDP provides privacy protection by allowing the user to report a perturbed version of the input $\Psi(x)$ instead of the true value $x$ to the aggregator. This ensures that even if the aggregator is malicious, the user's privacy is still protected.
LDP possesses two fundamental properties used in our mechanism~\cite{sigmod18ldp}: the composition theorem, which states that $k$ $\epsilon_i$-LDP mechanisms can be combined to achieve $\epsilon$-LDP protection, where $\epsilon=\sum_i \epsilon_i$;
and the ability to perform post-processing on private
outputs without affecting the privacy guarantee. 
For more information on LDP, please refer to  recent surveys~\cite{sigmod18ldp,survey20ldp,arxiv20ldp}.

\subsubsection{\textbf{Optimized Unary Encoding.}}
A \textit{frequency oracle} (FO) protocol enables the estimation of the frequency of any value $x$, which serves as a building
block of many LDP tasks. In this paper, we opt for Optimized Unary Encoding (OUE) as the FO protocol to achieve frequency estimation under LDP, which consists of three stages: encoding, perturbing, and aggregation~\cite{usenix17oue}.
\begin{itemize}[leftmargin=*]
    \item \textbf{Encoding.} The original value $x$ is first encoded as a length-$d$ binary vector $V$, where only the $x$-th bit is set to 1, \ie $V[i]= \mathbbm{1} (i==x)$.
    \item \textbf{Perturbing.} When reporting the encoded vector $V$, it is perturbed as follows:
    {\setlength{\abovedisplayskip}{3pt}
    \setlength{\belowdisplayskip}{3pt}
    \begin{equation}\label{equ:OUE-perturb}
        Pr[\hat{V}[i]=1] =
        \begin{cases}
            \frac{1}{2}, & \text{if } V[i] = 1\\
            \frac{1}{e^\epsilon + 1}, &\text{if } V[i] = 0,
        \end{cases}
    \end{equation}}where $\epsilon$ is the privacy budget and $\hat{V}$ is the reported noise vector.
    \item\textbf{Aggregation.} In order to obtain the unbiased estimation of the real value from noise vectors, the data curator needs to aggregate and adjust the received data as follows:
    {\setlength{\abovedisplayskip}{3pt}
    \setlength{\belowdisplayskip}{3pt}
    \begin{equation}\label{equ:OUE-unbiased}
    \tilde g(x) = \frac{\hat{g}(x)-nq}{\frac{1}{2}-q},\ q=\frac{1}{e^\epsilon + 1},
    \end{equation}}where $n$ is the total number of reported noise vectors, 
    and $\hat{g}(x)$ is the total number of the reported vectors $\hat{V}$ whose $x$-th bit is 1, \ie $\hat{g}(x)=|\{\hat{V}|\hat{V}[x]=1\}|$. Notice that this adjustment requires the budget $\epsilon$ to remain the same across all the reported data.
\end{itemize}
It can be theoretically proved~\cite{usenix17oue} that the adjusted estimation $\tilde{g}(x)$ is unbiased. The mean and variance of OUE are listed below.
{\setlength{\abovedisplayskip}{3pt}
\setlength{\belowdisplayskip}{3pt}
\begin{equation}\label{equ:oue-variance}
    \mean[\tilde{g}(x)] = f_x, \quad \var[\tilde{g}(x)]=n\frac{4e^\epsilon}{(e^\epsilon-1)^2},
\end{equation}}where $f_x$ is the frequency of value $x$, and $\epsilon$ is the privacy budget.

\subsection{Problem Statement}

Consider there is a collection of trajectories generated by mobile travelers on the roads, denoted by $\Set{T}$. People are unwilling to report their own trajectories to the untrusted data curator due to privacy concerns. Thus, we want to build a generative model over $\Set{T}$ by extracting key moving patterns from $\Set{T}$
with provable guarantees on individual privacy. We then employ the generative model to output a set of synthesized trajectories, denoted by $\Set{T}_{syn}$. The synthesized trajectories $\Set{T}_{syn}$ should collectively retain a high resemblance to the real trajectories $\Set{T}$, so that $\Set{T}_{syn}$ has many useful statistical and spatial features in common with $\Set{T}$. Finally, 
the synthetic trajectories $\Set{T}_{syn}$ should be robust against various location-based attacks, in order to strengthen privacy by maximizing attackers' probability of errors in identifying the true 
traces of users.


\section{Our Methodology}\label{sec:methodology}

In this section, we first give an overview of our proposed method \mymethod, and discuss the necessity and method for trajectory discretization. Then, we detail the three crucial components of \mymethod, which are estimated from collaborative mobility patterns and satisfy the strict privacy guarantee of LDP. Next, we detail the adaptive generation process by building a probabilistic model with learned spatial patterns. Finally, we conduct thorough discussion to analyze the privacy and computational cost of \mymethod, and present an approach to select a proper grid granularity.

\begin{figure}
    \centering
    \includegraphics[width=0.49\textwidth]{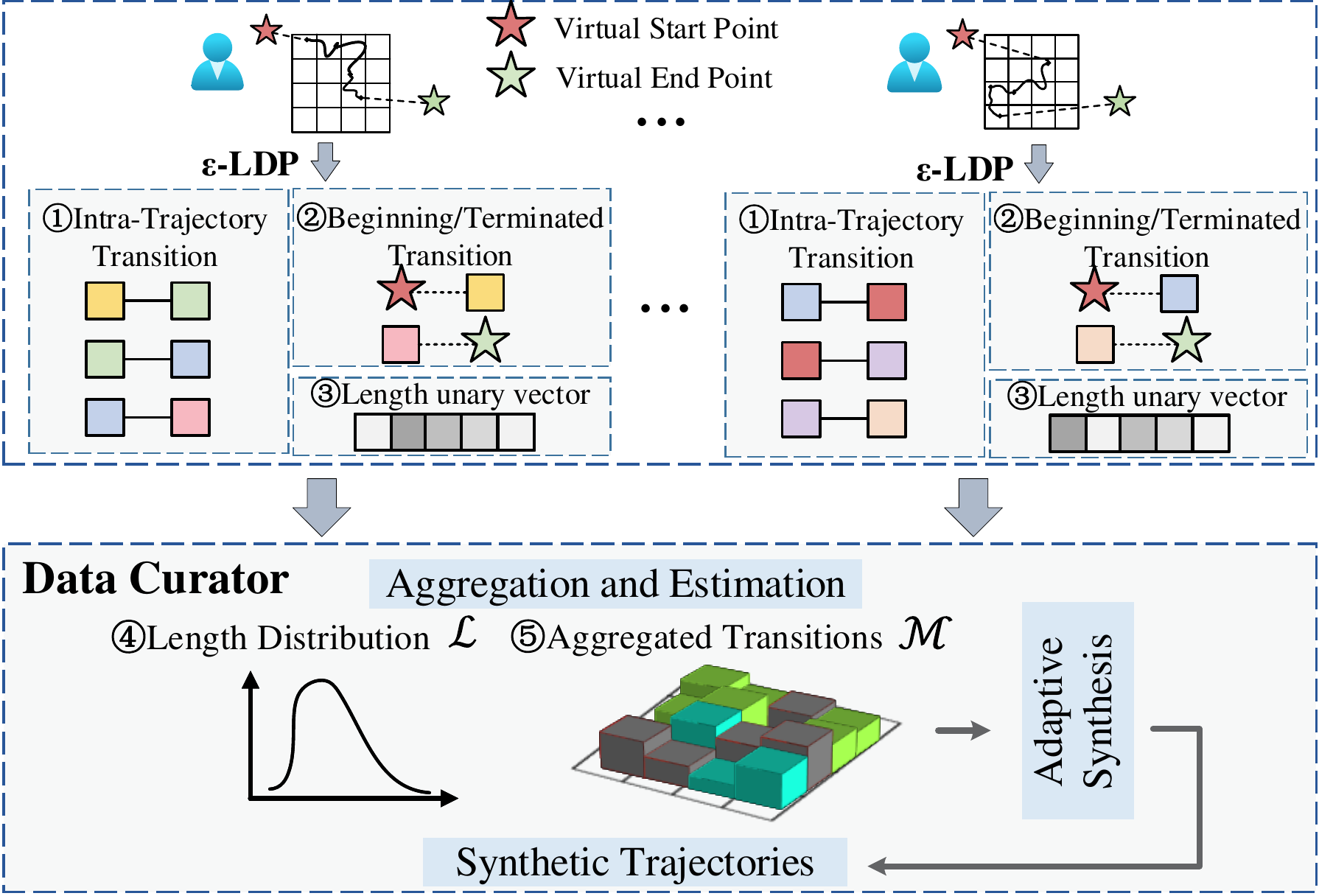}
    \vspace{-20pt}
    \caption{The \mymethod framework.}
    \label{fig:framework}
    \vspace{-10pt}
\end{figure}

\subsection{Solution Overview}
As illustrated in Figure~\ref{fig:framework}, both user and data curator participate in the process of \mymethod. On the user side, \mymethod first discretizes the trajectory into a sequence of 
adjacent cells, and exacts three key features of the trajectory by perturbing the trajectory length, intra-trajectory mobility, and beginning/terminated transitions. On the data curator side, \mymethod collects the perturbed information (\ie three features mentioned earlier) from all users, and builds a probabilistic model by estimating the key mobility features (\ie length distribution and aggregated transitions). Finally, \mymethod performs the synthesis 
to sample start point and subsequent 
transitions from the learned patterns, with its termination controlled 
by the estimated trajectory length and sampled end point.

Since the synopsis of \mymethod consists of three features learned from user's trajectory, the security budget $\epsilon$ is divided into three sub-budgets $\epsilon_1$, $\epsilon_2$, and $\epsilon_3$, such that $\sum\nolimits_{i=1}^3 \epsilon_i = \epsilon$. Detailed analysis on the privacy is presented in Section~\ref{sec:privacy-analysis}.



\subsection{Geospatial Discretization}

The representation of a trajectory as a sequence of points in a continuous two-dimensional domain, such as latitude-longitude coordinates, can be challenging to model. To overcome this, one common approach is to discretize the geographic space into grid cells. This is achieved by partitioning the entire space into equal-sized cells using a grid granularity of $N$.

Each trajectory is then transformed into a sequence of enumerable cells, $T = \{C_1, C_2,\cdots, C_{|T|}\}$, where $T[i]$ refers to the $i$-th cell visited by the trajectory $T$ and $|T|$ is the length of the trajectory in grid cells. The choice of $N$ affects the size of each grid cell, and thus the granularity of the discretized trajectory.


When $N$ is small, the space is partitioned into a limited number of grid cells, and each cell covers a large spatial region. Many points in the original trajectory will be represented by one single cell, and thus, the mobility patterns captured by the discretized trajectory would be very general and uninformative. On the other hand, if $N$ is big, each grid cell covers a very small region, resulting in the risk of having many empty cells that have not been passed by any trajectories. Perturbing these empty cells leads to high noise and inefficiency.
%
While some literature~\cite{icde13grid,ccs18adatrace} has provided  guidelines for selecting a proper grid granularity, they all rely on global statistics like spatial density, which are typically not available in the local setting where a user only has access to his/her own trajectories.
Instead of estimating the global statistics which costs extra privacy budget, we theoretically analyze the trade-off between estimation errors and the granularity of grid cells,
and propose a novel method to choose
$N$ in the local setting without consuming any budget, as to be detailed in Section~\ref{sec:guidance-of-grid}.

\subsection{Trajectory Length Distribution}\label{sec:lengh-distribution}

The first component in \mymethod is the estimated length distribution of trajectories, which is an indispensable character of trajectory synthesis since it indicates the probabilistic travel distance of users' trace, and serves as the deterministic constraint to terminate the synthesis process.
However, estimating the length distribution is a non-trivial task in the local setting since there is no aggregated statistics available. Instead, we aim to collect the length information from each user and approximate the distribution with frequency.
Specifically, we first define the domain of lengths as $|\Set{C}|$, under the assumption that the maximum travel distance of any trajectory is $|\Set{C}|$. Note that the techniques proposed in this paper can take any maximum distance as an input.  

On the user side (\ding{174} in Figure~\ref{fig:framework}), for trajectory $T$ with length $m$ (\ie $|T|=m$), we encode it into a $|\Set{C}|$-bit binary vector $V$, which sets only the $m$-th bit to 1 but all other bits to zero.
Next, the binary vector $V$ is perturbed individually and locally with budget $\epsilon_1$ according to Equation (\ref{equ:OUE-perturb}), and the user only reports the noisy vector $\hat{V}$ to the untrusted data curator.

On the data curator side (\ding{175} in Figure~\ref{fig:framework}), after collecting noisy vector $\hat{V}$ from different users, the curator estimates the frequency of each length value $m$ by counting the non-zero entry of each vector and adjusting the total count of each length via the unbiased statistic using Equation (\ref{equ:OUE-unbiased}). Finally, we view the length distribution as the categorical distribution $\lapl$, and the probability of length $m$ is $Pr(m) = \tilde{g}(m)/\sum\nolimits_{i=1}^{|\Set{C}|} \tilde{g}(i)$, where $\tilde{g}(m)$ is the unbiased OUE estimator of the true frequency of length $m$.

\vspace{3pt}
\noindent
\textbf{Error analysis.} We analyze the error of estimated length distribution $\lapl$ by using the Theorem~\ref{theorem-ratio} in Appendix~\ref{appendix-proofs}. $Pr(m)$ is the unbiased estimator of length probabilities: 
{\setlength{\abovedisplayskip}{3pt}
\setlength{\belowdisplayskip}{3pt}
\begin{equation}
{\mean}^*[Pr(m)]  = f_m / \sum\nolimits_{i=1}^{|\Set{\Set{C}}|} f_i,
\end{equation}}where $f_i$ is the frequency of trajectory length $i$ in the whole trajectory set $\Set{T}$.
Thus, the estimated length distribution $\lapl$ approximates the true distribution with the error:
{\setlength{\abovedisplayskip}{3pt}
\setlength{\belowdisplayskip}{3pt}
\begin{equation}
\begin{aligned}
\error(\lapl) = \sum\nolimits_{i=1}^{|\Set{C}|} (\frac{f_i}{|\Set{T}|})^2 [\frac{\sigma^2}{(f_i)^2} - \frac{2\sigma^2}{f_i |\Set{T}|} + \frac{8\sigma^2}{|\Set{T}|^2}],
\end{aligned}
\end{equation}}where $\sigma^2 = |\Set{T}|\frac{4e^{\epsilon_1}}{(e^{\epsilon_1}-1)^2}$ is the variance of OUE estimator in Equation (\ref{equ:oue-variance}), and $|\Set{T}|$ is the number of trajectories. Given the fixed statistics of trajectories (e.g., size and frequency of each length), the error of estimated length distribution $\lapl$ is on the order of $e^{-\epsilon}$, \ie a higher budget could help reduce the length error.




\subsection{Intra-Trajectory Mobility Model}\label{sec:transition-model}

To generate high utility and realistic synthetic trajectories, it is necessary to mimic actual intra-trajectory mobility (\ie the transition from $T[i]$ to $T[i+1]$). We achieve this by building a Markov chain for mobility modelling. A first-order Markov chain asserts that a location $T[l+1]$ in a trajectory depends only on its previous location $T[l]$ instead of all previous locations:
{\setlength{\abovedisplayskip}{3pt}
\setlength{\belowdisplayskip}{3pt}
\begin{equation}
Pr(T[l+1]=C~|~T[1]...T[l]) = Pr(T[l+1]=C~|~T[l]),
\end{equation}}which simplifies the complex sequential dependency $T[1]...T[l]$ with closest grid $T[l]$ for $T[l+1]$. We assume that the main mobility patterns of trajectories can be captured by the Markov chain, which is a collection of such probability $Pr(T[l+1]=C_{next}|T[l])$.
Considering the 
continuity of trajectories, we interpolate each trajectory to make sure that any two consecutive points in a trajectory corresponding to two adjacent grid cells, \ie $T[i]$ is adjacent to $T[i+1]$.
Accordingly, we can define the transition state $s_{ij}$ from grid cell $C_i$ to another grid cell $C_j$ as follows:
{\setlength{\abovedisplayskip}{3pt}
\setlength{\belowdisplayskip}{3pt}
\begin{equation}
Pr(s_{ij}) =
        \begin{cases}
            Pr(T[l+1]=C_j~|~T[l] = C_i), & \text{if } C_j \in \Set{N}_{C_i} \\
            0, &\text{otherwise},
        \end{cases}
\end{equation}}where $Pr(s_{ij})$ is the transition probability from $C_i$ to $C_j$, $\Set{N}_{C_i}$ captures the set of adjacent cells of $C_i$, and the transition model $\Set{S}$ is a set of all 
transition probabilities aggregated from trajectories.

Similar to the estimation of length distribution, we aim to collect the transition information from each user (\ding{172} in Figure~\ref{fig:framework}) and then aggregate them on the data curator side to estimate the overall transition states. Thus, each user's  trajectory $T$ can be represented as a sequence of transition states $S_T$ with length $|T|-1$. On the user side, for each state in $S_T$, we also opt for OUE to encode it into a $|\Set{S}|$-bit binary vector and then report the perturbed noisy version, where $|\Set{S}|$ is the domain of possible states. Since we only consider transition between two adjacent grids and each grid cell has up to 8 adjacent cells, we have $|\Set{S}|\approx 8|\Set{C}|$.

However, if we directly perturb each trajectory's transition by averaging the privacy budget $\epsilon_2$ with its own length $|S_T|$, it is impossible to estimate the unbiased frequency of each transition state $s$ from all trajectories since OUE protocol demands the budget used to be same across different users/trajectories. A straightforward solution to this problem is to equally divide budget $\epsilon_2$ across $|\Set{C}|$, the maximum length of $T$,
meaning that each trajectory is allowed to have up to $|\Set{C}|$ transitions,  
and each transition is assigned a budget of $\epsilon_2/|\Set{C}|$. Nevertheless, this approach suffers from huge waste of budget as the length of most trajectories could be far shorter than $|\Set{C}|$. Therefore, we set the number
of transitions as the $k$-th quantile of estimated length distribution $L_k$, so that we only upload $L_k$ transition states for all trajectories, and omit the remaining if the number of transitions $|S_T|$ is longer than $L_k$. Here, $k$ is a hyperparameter to balance the noise error and bias error. On one hand, we would like $k$ to be large, \ie every transition in trajectory data $\Set{T}$ can be captured for modelling mobility patterns, and the amount of noise added to state $s$ can be measured by:
{\setlength{\abovedisplayskip}{3pt}
\setlength{\belowdisplayskip}{3pt}
\begin{equation}\label{equ-noise-transition}
N(s,\epsilon_2, L_k) = {\var}^*[Pr(s), \epsilon_2/L_k],
\end{equation}}where $\var^*(Pr(s), \epsilon_2/L_k)$ is the approximated variance of transition probability $Pr(s)$ with budget $\epsilon_2/L_k$, which can be calculated via Theorem~\ref{theorem-ratio}. On the other hand, omitting the transitions for trajectories with length longer than $L_k$
will introduce bias into the model (\ie some of the transitions
are not accurately captured), and the bias of transition $s$ is expressed as $(1-k)^2\cdot f_s^2$,
where $f_s$ is the frequency of transition $s$. 
The sum of the noise and bias terms
defines the total error of $\Set{S}$:
{\setlength{\abovedisplayskip}{3pt}
\setlength{\belowdisplayskip}{3pt}
\begin{equation}\label{equ-error-transition}
\error(\Set{S}, \epsilon_2, L_k) = \sum\nolimits_{s\in \Set{S}} [ N(s,\epsilon_2, L_k) + (1-k)^2\cdot f_s^2 ].
\end{equation}}If $k$ is large, more transition information would be reported, which can effectively reduce the bias error. 
However, the budget of each transition will be small, which results in large noise for estimated frequency and hence more noise error. On the other hand, if $k$ is small, many useful transition states are omitted, leading to insufficiency to capture the global moving patterns from crowds. Therefore, the optimal $k$ is chosen to optimize the total error of $\Set{S}$:
{\setlength{\abovedisplayskip}{3pt}
\setlength{\belowdisplayskip}{3pt}
\begin{equation}\label{equ-optimal-k}
k^* = \argmin\nolimits_{ 0 <  k  \le 1 }~\error (\Set{S}, \epsilon_2, L_k).
\end{equation}}However, due to the unavailability of true frequency of transitions in the local setting, it is impossible to directly derive the optimal $k$. We analyze the impact of $k$ on utility in Section~\ref{sec:exp-impact-of-k}.



\vspace{3pt}
\noindent
\textbf{Discussion.} In \mymethod, we model the intra-trajectory moving patterns with the first-order Markov chain (\aka transition states), while we can naturally apply a higher-order Markov chain. 
However, since the size of Markov chain grows quickly \wrt the order, the noise introduced by the randomness mechanism in LDP can drown out the real signal due to the limited budget. Empirical results in Section~\ref{sec:exp-utility} show that the first-order Markov chain is sufficient to model the intra-mobility, and generates authentic trajectories.

\subsection{Beginning/Terminated Transitions}\label{sec: trip-distribution}

Real-world trajectory databases often consist of various trips, such as taxi trips, home-work commutes, \etc~These trips usually exhibit special start point and end point, which reveal the important spatial semantic of trajectories: pickups, home/work places, destinations, \etc~ Besides, the start/end points are also useful to guide the random walk during synthesis. Although we have a mobility model $\Set{S}$ for intra-trajectory movement, we still need a start state and an end state to specify the two endpoints of generated trajectory. 
We can naively assume a uniform distribution to randomly choose a cell from $C$ as a start point to perform random walk and to terminate the synthesis process when the trajectory reaches the assigned length, but it contradicts with a well-known fact that the distribution of start/end points of real trajectories is often heavily skewed~\cite{ccs18adatrace}. For example, many trips might start from or end at homes, while residential areas are not uniformly distributed in a city. Thus, naive solution mentioned above, even though simple, will incur bogus synthetic trajectories that jeopardize utility and authenticity.

To model the distribution of start/end points in trajectories, we add two special cells, namely, \textit{virtual} start point $C_a$ and \textit{virtual} end point $C_b$, which are connected to all the geographic cells in $\Set{C} $.
$C_a$/$C_b$ serves as
start/end point of any trajectories $T$ to record $T$'s beginning/terminated state.
To be more specific, the beginning transition state $A_i$ denotes that trajectory $T$ begins with $C_i$ (from virtual point $C_a$ to cell $C_i$), and the terminated transition state $B_j$ means that trajectory $T$ stops with $C_j$ (from $C_j$ to virtual end point $C_b$).
On the user side, we also utilize the OUE protocol to report a noisy version of the beginning/terminated transition states with budget $\epsilon_3$ on the user side, which is separated from the intra-trajectory transitions during reporting to reduce their noise since they only occur once per trajectory (\ding{173} in Figure~\ref{fig:framework}). On the data curator side, we combine the intra-transitions with beginning/terminated transitions to form the aggregated mobility model $\Set{M}$ (\ding{176} in Figure~\ref{fig:framework}). Specifically, the transition probability from $C_i$ to $C_j$ can be calculated as:
{\setlength{\abovedisplayskip}{3pt}
\setlength{\belowdisplayskip}{3pt}
\begin{equation}
Pr(M_{ij}) = \frac{\tilde{g}(M_{ij})}{\sum\nolimits_{r\in \Set{N}^*_{C_i}} \tilde{g}(M_{ir})},
\end{equation}}where $M_{ij}$ represents the transition state from $C_i$ to $C_j$. If $C_i$ is the virtual start point $C_a$, then $M_{ij} = A_j$; if $C_j$ is the virtual end point $C_b$, then $M_{ij} = B_i$; otherwise $M_{ij} = s_{ij}$. The aggregated neighbor $\Set{N}^*$ is defined as follows:
{\setlength{\abovedisplayskip}{3pt}
\setlength{\belowdisplayskip}{3pt}
\begin{equation}\label{equ:agg-neighbor}
        \Set{N}^*_{C_i} =
        \begin{cases}
            \Set{N}_{C_i}\cup \{C_b\}, & \text{if } C_i \in \Set{C}\\
            \Set{C}, &\text{otherwise } (\ie C_i \in \{C_a, C_b\}).
        \end{cases}
\end{equation}}Hence, the intra-transitions and the  beginning/terminated transitions are seamlessly integrated into
the aggregated mobility model $\Set{M}$, which can denote the overall moving patterns of trajectories.

\subsection{Trajectory Synthesis}

\mymethod builds a  probabilistic model for private synopsis, which consists of length distribution $\lapl$ and  aggregated mobility model $\Set{M}$.
%
Accordingly, the synthesis algorithm can be described in three steps, as depicted in Algorithm~\ref{alg:generation}. First, it determines the length $L$ of trajectory by sampling from the length distribution $\lapl$ (line 1). Second, it initializes $T_{syn}$ by assigning its starting point to the cell
sampled from $\Set{N}_{C_a}^*$ with probability proportional to $\Set{M}$ (lines 2-3). Third, it extends $T_{syn}$ by including a new cell $C_{next}$ based on its current location. It repeats the extension process until $C_{next}$ is the virtual end point $C_b$ or the length of $T_{syn}$ reaches $L$, whichever is earlier (lines 4-10).
%
Although the above synthesis scheme makes full use of the estimated patterns, a notable weakness is that the generated trajectory $T_{syn}$ might be much shorter than $L$ if it reaches $C_b$ in the early part of its extension.
%
Since the distribution of trajectory terminated points is heavily skewed (\ie some points are more likely to be the destination than others), directly sampling the end point according to the transition probability would lead to the inauthenticity and uninformativeness of synthetic trajectories. Thus, we re-weight the terminated probability $Pr(M_{ib})$ by taking the current length $l$ into consideration:
{\setlength{\abovedisplayskip}{3pt}
\setlength{\belowdisplayskip}{3pt}
\begin{equation}\label{equ:reweight}
\tilde{Pr}(M_{ib}) = (\alpha + \beta l) \times Pr(M_{ib}),
\end{equation}}where $\tilde{Pr}(M_{ib})$
is the adjusted termination probability from  current location $C_i$ to virtual end point $C_b$, and $\alpha$ and $\beta$ are two hyperparameters to control the influence of length, \ie large $\alpha$ and $\beta$ implies that the synthesis process tends to be stopped even when the length is small, and small $\alpha$ and $\beta$ allows the model to synthesize trajectories with more points. The final adaptive synthesis algorithm is presented in Algorithm~\ref{alg:generation}.

To obtain an authentic trajectory database with good utilities, we run the synthesis algorithm multiple times until the number of synthetic trajectories reaches the number of trajectories in $\Set{T}$.
Then, this set of trajectories $\Set{T}_{syn}=\cup T_{syn}$ becomes the substitution of the real trajectory set $\Set{T}$ for various spatial analysis tasks without sacrificing users' privacy.





\begin{algorithm}[t]
\small
\caption{Trajectory synthesis}
\label{alg:generation}
\LinesNumbered
\KwIn{a grid $\Set{C}$, a length distribution $\lapl$, an aggregated mobility model $\Set{M}$, virtual start point/end point $C_a$/$C_b$}
\KwOut{a candidate synthetic trajectory $T_\text{syn}$}

trajectory length $L \gets$ sample($\lapl$)\\
sample  $C_{start}\gets$ from $\Set{N}^*_{C_a}$ with probability proportional to $\Set{M}$\\
initialize $T_{syn}$: $T_{syn}[1]\gets C_{start}$\\
\For{$ l \gets 2$ to $L$}{
    reweight the terminated condition according to Equation (\ref{equ:reweight}) \\
    sample $C_{next}$ from $\Set{N}^*_{T_{syn}[l-1]}$ with probability proportional to $\Set{M}$ \\
    \eIf {$C_{next} = C_b$}
    {
        \Return $T_{syn}$ \\
    }{
        set $T_{syn}[l] \gets C_{next}$
    }
}
\Return $T_{syn}$
\end{algorithm}

\subsection{Privacy Analysis}\label{sec:privacy-analysis}

We now analyze the privacy of the synthesis solution through a sketch proof, and discuss the budget allocation strategy.

\begin{theorem}\label{theorem:ldp}
The while process of \mymethod satisfies $\epsilon$-LDP. 
\end{theorem}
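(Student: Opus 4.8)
The plan is to decompose the mechanism into its three per-user reporting phases and apply the composition theorem together with the post-processing property of LDP. First I would observe that the only stage where any user's raw data leaves the device is the perturbation of the three encoded vectors: the trajectory length vector (perturbed with budget $\epsilon_1$), the intra-trajectory transition vectors (perturbed with budget $\epsilon_2$, spread across the $L_k$ reported states), and the beginning/terminated transition vectors (perturbed with budget $\epsilon_3$). Everything the data curator subsequently does — aggregating the noisy vectors via Equation~(\ref{equ:OUE-unbiased}), forming the categorical length distribution $\lapl$, building the aggregated mobility model $\Set{M}$, re-weighting terminated probabilities, and running Algorithm~\ref{alg:generation} to produce $\Set{T}_{syn}$ — is a function of already-privatized outputs and therefore adds no privacy cost by post-processing.

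Next I would establish each phase's guarantee individually. For the length component, reporting the OUE-perturbed vector $\hat V$ under Equation~(\ref{equ:OUE-perturb}) with budget $\epsilon_1$ is exactly the OUE frequency-oracle mechanism, which is known to satisfy $\epsilon_1$-LDP~\cite{usenix17oue}; I would cite this directly. For the beginning/terminated transitions, the same argument gives $\epsilon_3$-LDP, since each trajectory contributes exactly one beginning state and one terminated state, encoded and perturbed once with budget $\epsilon_3$. For the intra-trajectory mobility model, a user reports $L_k$ perturbed transition-state vectors, each obtained by OUE with budget $\epsilon_2/L_k$; by the sequential composition theorem over these $L_k$ reports the whole transition-reporting phase is $(L_k \cdot \epsilon_2/L_k) = \epsilon_2$-LDP. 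I would note explicitly that truncating a long trajectory to its first $L_k$ transitions is a data-independent choice of how much to report and does not affect the per-report guarantee.

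Finally, I would combine the three phases by composition: a user's full transcript to the curator is the triple of outputs from the $\epsilon_1$-, $\epsilon_2$-, and $\epsilon_3$-LDP mechanisms, so the overall per-user mechanism is $(\epsilon_1+\epsilon_2+\epsilon_3)$-LDP, and by the budget split $\sum_{i=1}^3 \epsilon_i = \epsilon$ this is $\epsilon$-LDP. Since $\Set{T}_{syn}$ is post-processing of this transcript, the entire \mymethod pipeline is $\epsilon$-LDP.

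\textbf{Anticipated main obstacle.} The subtle point — and the step I would be most careful about — is the intra-trajectory phase, because a single user contributes multiple correlated transition records drawn from one trajectory. One must argue that it is legitimate to treat these as a sequence of mechanisms composed on the \emph{same} individual's data (so that sequential composition applies with the budgets summing) rather than as independent users, and that the deterministic truncation to $L_k$ states is independent of the sensitive content in a way that does not leak. I would also want to confirm that using the \emph{estimated} quantile $L_k$ (itself derived from the already-released length distribution) as the truncation parameter is sound: since $L_k$ is computed by the curator from post-processed public aggregates, it is a constant from each user's perspective at reporting time, so no extra budget is consumed. Making this ``$L_k$ is public, hence free'' argument precise is the one place the proof could be challenged.
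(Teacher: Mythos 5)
Your proposal is correct and follows essentially the same route as the paper's own (sketch) proof: sequential composition over the three sub-budgets $\epsilon_1,\epsilon_2,\epsilon_3$, sequential composition again across the $L_k$ intra-transition reports each using $\epsilon_2/L_k$, and post-processing for all curator-side aggregation and synthesis. Your added care about the truncation to $L_k$ being data-independent (since $L_k$ is computed from already-privatized length reports) is a legitimate refinement that the paper leaves implicit, but it does not change the argument.
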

\begin{proof}[Proof of Theorem~\ref{theorem:ldp}]
\mymethod treats $\epsilon$ as the total privacy budget, and distributes it to three sub-budgets (one for each key feature in the synopsis) such that $\sum_{i=1}^3 \epsilon_i = \epsilon$. Perturbing and reporting a feature (length, intra-transition, and start/end transition) consumes the $\epsilon_i$ allocated to it, and thus, depleting the total $\epsilon$ after the reporting phase is complete, based on the sequential composition property. Besides, during the intra-mobility modelling, we divide the budget $\epsilon_2$ for each transition state of trajectory, which also satisfies the sequential composition property. Then, any additional data-independent operations like aggregation and sampling on the perturbed statistics are viewed as post-processing. As a result, \mymethod remains $\epsilon$-locally differentially private.
\end{proof}
\vspace{-1pt}
The budget allocation of $\epsilon$ can be configured by \mymethod automatically or according to the demands of real applications.
The current implementation of \mymethod comes with a default budget distribution, which was empirically determined to yield high average utility: $\epsilon_1 = \epsilon/10$ for length distribution, $\epsilon_2+\epsilon_3 = 9\epsilon/10$ for transitions, where we allocate equal budget to each transition (\ie intra-transitions and beginning/terminated transitions). We find that the estimation of length distribution only requires a small budget, while the transitions of trajectories consume the major budget since they determine the synthesis process. Although the above strategy is beneficial from the overall utility maximization perspective, the budget allocation is very flexible, and it can be easily assigned based on the specific needs of various applications (\eg pick-up identification may need more budget for beginning/terminated transition estimation).

\subsection{Computational Cost}\label{sec:computationl-cost}

We also discuss the computational cost of the proposed \mymethod, which highlights how it is far more practical than other alternatives. We analyze the computational cost from both user side and data curator side in the following.

\vspace{3pt}
\noindent
\textbf{Computation on users.} We assume that each user keeps one trajectory of his/her own. Since each perturbation is done locally on individual device, we only analyze the computational cost for each user/trajectory. First, the cost of perturbing a binary vector of length is $\bigo(|\Set{C}|)$, where $|\Set{C}|$ is the number of grids. Second, the computational complexity of intra-transition modelling is $\bigo(|T||\Set{C}|)$, where $|T|$ is the trajectory length. Finally, the cost of adding noise to the beginning/terminated transitions is also $\bigo(|\Set{C}|)$. Thus, the overall computation on the user side is $\bigo(|T||\Set{C}|)$. As all the perturbation can be implemented with bit operations, the computation is almost negligible and also affordable for any location-aware devices.

\vspace{3pt}
\noindent
\textbf{Computation on data curator.} After collecting all the perturbed information from users, data curator first estimates the unbiased frequency with OUE, and calculates the quantile of length, at the cost of $\bigo(|\Set{T}||\Set{C}|)$, where $|\Set{T}|$ is the number of trajectories/users. Besides, the computational complexity of pattern estimations (\ie length distribution, mobility patterns, and beginning/terminated states) is $\bigo(|\Set{C}|)$. Moreover, the synthesis algorithm would cost $\bigo(|\Set{T}|L)$ to synthesize the same number of trajectories as real trajectories $\Set{T}$, where $L$ denotes the mean of length distribution $\Set{L}$.
Empirical experiments in Section~\ref{sec:exp-efficiency} suggest that
the synthesis process dominates
the running time of \mymethod, and \mymethod significantly outperforms the competitor by more than two orders of magnitude.
Moreover, the data curator typically has much more computing power than the mobile devices typically used by end-users,
indicating that \mymethod is more flexible than locally point-based privacy mechanisms (\ie \baseline).

\subsection{Selecting the Grid Granularity $N$}
\label{sec:guidance-of-grid}

Finally, we propose a guideline for selecting proper grid granularity $N$ in the local setting. As mentioned before, grid granularity $N$ is an important hyperparameter for trajectory representation. We follow~\cite{icde13grid} and analyze the effect of $N$ by considering the range query on trajectories, which is the most common spatial task used by various real-world applications.
The task of a range query is to retrieve all locations of trajectories that fall within the query region.
Given a rectangular shape query region, let $r$ be the portion of the entire space covered by the query region.
There are mainly two sources of errors in \mymethod when performing a range query. The first is \textit{estimation error}. In our framework, noise is added to each transition locally, and the error of estimated transition probability is in the order of $\sqrt{ne^\epsilon/(e^\epsilon-1)^2}$. Since the query covers about $rN^2$ cells, the total error introduced by perturbation in this query is in the order of $N\sqrt{nre^\epsilon/(e^\epsilon-1)^2}$.
The second is \textit{non-uniformity error} that is proportional to the number of data points in the trajectories
that fall on the boundary 
of the query region~\cite{icde13grid}. For a query region that covers $r$ portion of the entire space (\ie grid), the length of each side is
proportional to $\sqrt{r}$ of the domain length, and thus, the number of cells overlapped with the query's boundary is in the order of $N\sqrt{r}$, and the total number of points fallen on the boundary
is in the order of $N_p/N^2\times N\sqrt{r}=\sqrt{r}N_p/{N}$, where $N_p$ represents the total number of points in all synthetic trajectories. The goal is to minimize the sum of two errors:
{\setlength{\abovedisplayskip}{3pt}
\setlength{\belowdisplayskip}{3pt}
\begin{equation}\label{equ:min-grid-size}
    \mathrm{minimize} ~N\sqrt{\frac{nre^{\epsilon'}}{(e^{\epsilon'}-1)^2}} + \frac{\sqrt{r}N_p}{N},
\end{equation}}where $\epsilon'=\epsilon_2/L$ is the budget for each transition. Since we cannot obtain trajectory length $L$ before descretization, we replace it with the geographic distance, which is on the order of $L_{\Space{R}}/f$, where $L_{\Space{R}}$ is the average number of points in trajectory,
and $f$ is the sampling ratio of the device. Besides, we also use the number of points on real trajectories $|\Set{T}|L_{\Space{R}}$ to approximate $N_p$. Finally, by minimizing Equation (\ref{equ:min-grid-size}), $N$ should be set as follows:
{\setlength{\abovedisplayskip}{3pt}
\setlength{\belowdisplayskip}{3pt}
\begin{equation}\label{equ:choose-granularity}
    N = \lambda\cdot \sqrt[4]{\frac{|\Set{T}|L_{\Space{R}}(e^{\epsilon f/L_{\Space{R}}} - 1)^2}{e^{\epsilon f/L_{\Space{R}}}}},
\end{equation}}where $\lambda$ is the hyperparameter which depends on the uniformity of the points distribution in the dataset. Note that all the parameters are easily obtained from the statistics of trajectory data (such as sampling ratio, average point count, and the data size), and we also evaluate the effectiveness of this guideline in Section~\ref{sec:exp-impact-of-grid-granularity}.



\section{Experimental evaluation}\label{sec:exp}

In this section, we first introduce the detailed experimental setup.
Next, we conduct experiments on utility as well as efficiency to illustrate the superiority of \mymethod.
Then, we conduct insight studies to evaluate the impact of each component in \mymethod. Finally, we evaluate the scalability of \mymethod and its attack-resilient ability to various real-word location-based attacks.


\subsection{Experimental Setup}

\subsubsection{\textbf{Datasets}}

We conduct our experiments on four benchmark trajectory datasets. Table~\ref{tab:dataset_description} summarizes the overall statistics.
\begin{itemize}[leftmargin=*]
    \item \textbf{Oldenburg}\footnote{\url{http://iapg.jade-hs.de/personen/brinkhoff/generator/}} is a synthetic dataset simulated by Brinkhoff's network-based moving objects generator. We generate 500,000 trajectories based on the map of Oldenburg city.
    \item \textbf{Porto}\footnote{\url{http://www.geolink.pt/ecmlpkdd2015-challenge/dataset.html}} contains taxi traces over 8 months in the city of Porto. We extract 361,591 trajectories from the central areas.
    \item {\textbf{Hangzhou}} is a private trajectory database which consists of the trace of taxis in Hangzhou city.
    \item \textbf{Campus}\footnote{\url{https://github.com/UBCGeodata/ubc-geospatial-opendata}} contains 434 buildings of British Columbia campus. We follow~\cite{vldb21ngram} to generate 1 million trajectories based on the campus buildings.
\end{itemize}

\subsubsection{\textbf{Baseline}}\label{sec:baseline}

We compare our method with \baseline~\cite{vldb21ngram}, which is the state-of-the-art (and the only) private trajectory publication method that satisfies rigorous LDP. Different from our approach, \baseline is a point-based perturbation model which directly resembles user's trajectory in the local setting. For fair comparison, we discard the auxiliary temporal and POI information, and only leverage physical distance 
to ensure the closeness between original
and sampled n-grams in geospatial space. 

\begin{table}[t]
    \centering
    \caption{Statistics of the datasets used in our experiments.}
    \resizebox{0.46\textwidth}{!}{
    \begin{tabular}{c|c|c|c}
    \hline
         \textbf{Dataset} &\textbf{Size} &\textbf{Average Length} &\textbf{Sampling Interval} \\ \hline
         \textbf{Oldenburg} &500,000 &69.75 &~15.6 sec \\ \hline
         \textbf{Porto} &361,591 &34.13 &~15 sec\\ \hline
         \textbf{Hangzhou} &348,144 &125.02 &~5 sec \\ \hline
         \textbf{Campus} &1,000,000 &35.98 &~25 sec \\ \hline
    \end{tabular}
    }
    \label{tab:dataset_description}
\end{table}

\subsubsection{\textbf{Experimental Settings}}

Based on the guideline described in
Section~\ref{sec:guidance-of-grid},
the grid granularity parameter $N$ is set to 6 for Oldenburg, Porto, and Campus dataset, and 8 for Hangzhou dataset. As for the $k$ quantile of estimated length distribution, we set it to 0.9 for all the experiments. We generate synthetic database $\Set{T}_{syn}$ with cardinality $|\Set{T}_{syn}| = |\Set{T}|$ for utility comparison. We set $\alpha=0.3$ and $\beta=0.2$ for the reweighting function defined in Equation (\ref{equ:reweight}). We set $\lambda = 2.5$ for selecting the grid granularity. As for the query region, we set $r$ as the $1/9$ proportion of the entire space.
Our experiments are conducted on a computer with Intel Xeon 2.1GHz CPU and 32 GB main memory.

\begin{table*}[t]
\centering
\caption{Utility performance comparison. The best result in each category is shown in bold. For Kendall-tau and FP F1 Similarity, higher values are better. For remaining metrics, lower values are better.}
\label{tab:overall-performance}
\resizebox{0.98\textwidth}{!}{
\begin{tabular}{c|c|c c c| c c c| c c c| c c c}
\hline
\multicolumn{2}{c|}{} &
\multicolumn{3}{c|}{\textbf{Oldenburg}} &
\multicolumn{3}{c|}{\textbf{Porto}} &
\multicolumn{3}{c|}{\textbf{Hangzhou}} &
\multicolumn{3}{c}{\textbf{Campus}}\\
\multicolumn{2}{c|}{} &
\multicolumn{1}{c}{$\epsilon=0.5$} &
\multicolumn{1}{c}{$\epsilon=1.0$} &
\multicolumn{1}{c|}{$\epsilon=1.5$} &
\multicolumn{1}{c}{$\epsilon=0.5$} &
\multicolumn{1}{c}{$\epsilon=1.0$} &
\multicolumn{1}{c|}{$\epsilon=1.5$} &
\multicolumn{1}{c}{$\epsilon=0.5$} &
\multicolumn{1}{c}{$\epsilon=1.0$} &
\multicolumn{1}{c|}{$\epsilon=1.5$} &
\multicolumn{1}{c}{$\epsilon=0.5$} &
\multicolumn{1}{c}{$\epsilon=1.0$} &
\multicolumn{1}{c}{$\epsilon=1.5$}
\\ \hline
\multirow{2}*{\textbf{Density Error}}
&\baseline      &0.0323  &0.0301 &0.0274     &0.3311 &0.3218 &0.3049     &0.1930 &0.1847 &0.1831      &0.1184 &0.1172 &0.1105\\
&\mymethod   &\textbf{0.0094}  &\textbf{0.0077}      &\textbf{0.0085}       &\textbf{0.0090}  &\textbf{0.0081} &\textbf{0.0069}  &\textbf{0.0210}     &\textbf{0.0194}            &\textbf{0.0193}  &\textbf{0.0042} &\textbf{0.0043} &\textbf{0.0037}\\ \hline
\multirow{2}*{\textbf{Query Error}}
&\baseline      &0.3362  &0.3321 &0.3206      &0.8171 &0.8102 &0.8055     &0.6454 &0.6411 &0.6332  &0.7605 &0.7579 &0.7611\\
&\mymethod   &\textbf{0.2691}  &\textbf{0.2595}      &\textbf{0.2522}        &\textbf{0.3520}  &\textbf{0.3312} &\textbf{0.3011}  &\textbf{0.2933}     &\textbf{0.2814}      &\textbf{0.2792}        &\textbf{0.2014} &\textbf{0.2011} &\textbf{0.1980}\\ \hline
 \multirow{2}*{\textbf{Hotspot Query Error}}
&\baseline      &0.2972  &0.2972 &0.2972      &1.0000 &1.0000 &1.0000     &0.1529 &0.1529 &0.1529     &0.7001 &0.7001 &0.7001\\
&\mymethod   &\textbf{0.0593}   &\textbf{0.0593}      &\textbf{0.0530}        &\textbf{0.0000}  &\textbf{0.0000} &\textbf{0.0000} &\textbf{0.0131}     &\textbf{0.0131}      &\textbf{0.0131}      &\textbf{0.0013} &\textbf{0.0013} &\textbf{0.0000}\\
 \hline
\multirow{2}*{\textbf{Kendall-tau}}
&\baseline      &0.7479  &0.7512 &0.7498      &0.5442 &0.5791 &0.5934     &0.6164 &0.6242 &0.6623       &0.3794 &0.3841 &0.3852\\
&\mymethod   &\textbf{0.8874}  &\textbf{0.8944}      &\textbf{0.8942}       &\textbf{0.6672}  &\textbf{0.7114} &\textbf{0.7581}  &\textbf{0.6782}     &\textbf{0.7044}      &\textbf{0.7174}      &\textbf{0.8603}  &\textbf{0.8574}  &\textbf{0.8667}\\ \hline
\multirow{2}*{\textbf{Trip Error}}
&\baseline      &0.1251  &0.1231 &0.1230    &0.5122 &0.5043 &0.4979     &0.4493 &0.4490 &0.4421       &0.3052 &0.3024 &0.2995\\
&\mymethod   &\textbf{0.0697}  &\textbf{0.0683}      &\textbf{0.0672}        &\textbf{0.0762}  &\textbf{0.0778} &\textbf{0.0771} &\textbf{0.0531}     &\textbf{0.0513}      &\textbf{0.0504}      &\textbf{0.0744} &\textbf{0.0740} &\textbf{0.0729}\\
 \hline
\multirow{2}*{\textbf{Length Error}}
&\baseline      &0.1142  &0.1134 &0.1112      &0.1812 &0.1823 &0.1791    &0.02922 &0.02736 &0.02607      &0.1102 &0.1041 &0.1045\\
&\mymethod   &\textbf{0.0373}  &\textbf{0.0370}      &\textbf{0.0379}       &\textbf{0.0410}  &\textbf{0.0399} &\textbf{0.0392}  &\textbf{0.0036}     &\textbf{0.0037}      &\textbf{0.0032}       &\textbf{0.0984}  &\textbf{0.0981}  &\textbf{0.0987}\\
 \hline
\multirow{2}*{\textbf{Diameter Error}}
&\baseline      &0.1244  &0.1242 &0.1221      &0.2201 &0.2180 &0.2174     &0.2013 &0.1989 &0.1972      &0.0682 &0.0663 &0.0658 \\
&\mymethod   &\textbf{0.0568}  &\textbf{0.0570}      &\textbf{0.0562}        &\textbf{0.0354}  &\textbf{0.0340} &\textbf{0.0331}  &\textbf{0.0572}     &\textbf{0.0569}      &\textbf{0.0563}      &\textbf{0.0591} &\textbf{0.0589} &\textbf{0.0587}\\
 \hline
\multirow{2}*{\textbf{Pattern F1}}
&\baseline      &0.33  &0.33 &0.34     &0.19 &0.18 &0.19     &0.24 &0.26 &0.26      &0.32 &0.32 &0.33 \\
&\mymethod   &\textbf{0.69}  &\textbf{0.69}      &\textbf{0.69}        &\textbf{0.67}  &\textbf{0.63} &\textbf{0.65}  &\textbf{0.80}     &\textbf{0.80}     &\textbf{0.80}      &\textbf{0.71} &\textbf{0.71} &\textbf{0.72}\\
 \hline
\multirow{2}*{\textbf{Pattern Error}}
&\baseline      &0.8033  &0.8004 &0.7982      &0.9206 &0.9232 &0.9181     &0.8782 &0.8725 &0.8754      &0.7992 &0.7912 &0.7789\\
&\mymethod   &\textbf{0.5693}  &\textbf{0.5632}      &\textbf{0.5594}        &\textbf{0.6498}  &\textbf{0.6687} &\textbf{0.6692} &\textbf{0.4593}     &\textbf{0.4552}      &\textbf{0.4554}      &\textbf{0.5502} &\textbf{0.5508} &\textbf{0.5501}\\
 \hline

\end{tabular}}
\end{table*}

\subsection{Utility Metrics}
To comprehensively quantify the utility of the synthetic trajectories, we adopt various utility metrics from three categories, including global level, trajectory level, and semantic level.

\vspace{3pt}
\noindent
\textbf{Global level utility} measures the spatial patterns of trajectories in a global view, which serves as a building block of various spatial applications like range query and traffic forecasting. We use the following metrics for evaluation:
\begin{itemize}[leftmargin=*]
    \item \textbf{Density error} evaluates the density difference between synthetic trajectory set $\Set{T}_{syn}$ and real trajectory set $\Set{T}$. 
    \begin{equation}
        Density\ Error=JSD\big(\mathcal{D}(\Set{T}), \mathcal{D}(\Set{T}_{syn})\big),
    \end{equation}
    where $\mathcal{D}(\Set{P})$ denotes the grid density distribution in a given set $\Set{P}$, and $JSD(\cdot)$ represents the Jenson-Shannon divergence between two distributions.
    \item\textbf{Query error} is a popular measure for evaluating data synthesis algorithms ranging from tabular data to graph and location data~\cite{ccs12ngram,vldbj14network,edbt14}. We consider range queries
    of trajectories in a random spatial region $R$, \ie $Q(\Set{P})$ returns the number of points in any trajectory of a specified set $\Set{P}$ that are within the spatial region $R$.
    {\setlength{\abovedisplayskip}{3pt}
    \setlength{\belowdisplayskip}{3pt}
    \begin{equation}
        Query\ Error=\frac{|Q(\Set{T})-Q(\Set{T}_{syn})|}{max\{Q(\Set{T}), z)\}},
    \end{equation}}where $z$ is the sanity bound to weaken the influence of queries that return very small counts. 
    We set the sanity bound $z$ to $\frac{\sum_{\Set{T}}|T|}{100}$, 
    and report the average result of 200 random queries.
    \item\textbf{Hotspot query error} measures the ability of preserving spatial hotspots. Specifically, we choose top-$n_h$ mostly visited cells in $\mathcal{T}$ and $\mathcal{T}_{syn}$ as the hotspots set $H$ and $H_{syn}$, respectively.
    {\setlength{\abovedisplayskip}{3pt}
    \setlength{\belowdisplayskip}{3pt}
    \begin{equation}
        HQE=1-\frac{\sum_{C_i\in H_{syn}} rel(C_i)/\log(rank_{H_{syn}}(C_i)+1)}{\sum_{j=1}^{n_h} 1/(j\cdot \log(j+1))}, 
    \end{equation}}where $rank_{H_{syn}}(C_i)$ is the position of $C_i$ in $H_{syn}$, $rel(C_i)$ is the relativity score of $C_i$: when $C_i\in H$, $rel(C_i) = 1/rank_{H}(C_i)$, else  $rel(C_i) = 0$. We set $n_h=5$.
    \item\textbf{Kendall's tau coefficient} is for modelling the discrepancies in locations’ popularity ranking~\cite{ccs18adatrace}. Let $\mathcal{D}(C_i)$ be the density of cell $C_i$, and $(C_i,C_j)$ be a \textit{concordant pair} if and only if $\mathcal{D}(C_i)\ge\mathcal{D}(C_j)$ or $\mathcal{D}(C_i)\le\mathcal{D}(C_j)$ holds both on $\Set{T}$ and $\Set{T}_{syn}$. Otherwise, it's a \textit{discordant pair}.
    {\setlength{\abovedisplayskip}{3pt}
    \setlength{\belowdisplayskip}{3pt}
    \begin{equation}
        Kendall\text{-}tau=\frac{N_c-N_d}{|\Set{C}|(|\Set{C}|-1)/2},
    \end{equation}}where $N_c$ and $N_d$ represent the number of concordant pairs and the number of discordant pairs respectively, and $|\Set{C}|$ captures the total number of grid cells.
\end{itemize}

\vspace{3pt}
\noindent
\textbf{Trajectory level utility} denotes the spatial features within each trajectory, which also fascinates a wild range of applications like origin-destination analysis and commutes studies. We employ the following three evaluation metrics:
\begin{itemize}[leftmargin=*]
    \item  \textbf{Trip error} measures how well the correlations between trips’ starting points and ending points are preserved~\cite{ccs18adatrace}. Specifically, we calculate the probability distribution of start/end points in $\Set{T}$ and that in $\Set{T}_{syn}$, and utilize Jensen-Shannon divergence to measure their difference.
    \item \textbf{Length error} focuses on the difference between real and synthetic datasets in terms of trajectory lengths (\ie distance travelled by each trajectory), which calculates the total distance in a trajectory by adding up the Euclidean distance between consecutive points.
    \item \textbf{Diameter error} is defined as the difference of maximum distance frequency between real and synthetic datasets, where the maximum distance refers to the maximum Euclidean distance between two points in a trajectory~\cite{vldb15dpt,icde22gl}.
\end{itemize}
Since travel distance and diameter are continuous, we follow~\cite{ccs18adatrace} to separate them into 20 equi-width buckets, and calculate the distribution of the obtained histogram. We then use Jenson-Shannon divergence to quantify the above errors.

\begin{table*}[t]
    \centering
    \caption{Average runtime in seconds. We report the average running time per 1,000 trajectories of each component.}
    \label{tab:time-efficiency}
    \resizebox{0.9\textwidth}{!}{

    \begin{tabular}{c|ccc|ccc|c||c}
    \hline
       & \multicolumn{3}{c|}{\textbf{\mymethod : User side}}   & \multicolumn{3}{c|}{\textbf{\mymethod : Curator side}}    &\multirow{2}{*}{\textbf{Total}} &\multirow{2}{*}{\textbf{\baseline Total}} \\
      & \textbf{Length} &\textbf{Intra-traj tran.} &\textbf{Beg./Ter. tran.} &\textbf{Preparation} &\textbf{Pattern est.} &\textbf{Synthesis} & & \\ \hline
     \textbf{Oldenburg} &0.030  &0.192  &0.104 &0.010 &0.001 &0.219 &\textbf{0.556} &183 \\
     \textbf{Porto} &0.028 &0.148 &0.092 &0.015 &0.001 &0.179 &\textbf{0.463} &132  \\
     \textbf{Hangzhou} &0.028 &0.302 &0.098 &0.012 &0.002 &0.304 &\textbf{0.746} &418 \\ 
     \textbf{Campus} &0.028 &0.138 &0.083 &0.009 &0.001 &0.189 &\textbf{0.448} &197 \\\hline

    \end{tabular}}
\end{table*}

\vspace{3pt}
\noindent
\textbf{Semantic level metrics.} Apart from the aforementioned statistic metrics of trajectories' attributes, we also adopt two semantic level metrics to mine the mobility patterns that are hidden behind those statistics. A pattern $P$ is defined as an ordered sequence of consecutive cells, and we select top-$n$ most occurred patterns in $\Set{T}$ and $\Set{T}_{syn}$, denoted as pattern sets $FP$ and $FP_{syn}$ respectively, and calculate the following metrics:
\begin{itemize}[leftmargin=*]
    \item \textbf{Pattern F1} evaluates the similarity between the selected most frequent pattern sets $FP$ and $FP_{syn}$:
    {\setlength{\abovedisplayskip}{3pt}
    \setlength{\belowdisplayskip}{3pt}
    \begin{equation}
        Pattern\ F1=2\times\frac{Precision(FP, FP_{syn})\times Recall(FP, FP_{syn})}{Precision(FP, FP_{syn})+ Recall(FP, FP_{syn})}.
    \end{equation}}
    \item\textbf{Pattern Error} measures the relative difference between the number of pattern occurrences in each dataset:
    {\setlength{\abovedisplayskip}{3pt}
    \setlength{\belowdisplayskip}{3pt}
    \begin{equation}
        Pattern\ Error = \frac{1}{|FP|}\textstyle\sum_{P\in FP}\frac{|n^P-n^P_{syn}|}{n^P},
    \end{equation}}where $n^P$/$n^P_{syn}$ is the number of occurrence of pattern $P$ in the dataset $FP$/$FP_{syn}$, and we use top 100 frequent patterns (\ie $|FP|=100$) for evaluation.
\end{itemize}

\begin{table*}[t]
    \centering
    \caption{Impact of beginning/terminated transitions. Best result is shown in bold. 
    HQ Error denotes ``Hotpot Query Error''.
    }
    \label{tab:biginning/erminated-transitions}
    \resizebox{0.97\textwidth}{!}{
    \begin{tabular}{c|c|c|c|c|c|c|c|c|c|c}
    \hline
      \textbf{Dataset} &\textbf{Model}  &\textbf{Density Error} &\textbf{Query Error} &\textbf{HQ Error} &\textbf{Kendall-tau}  &\textbf{Trip Error}  &\textbf{Length Error}  &\textbf{Diameter Error}  &\textbf{Pattern F1}  &\textbf{Pattern Error}  \\ \hline
      \multirow{4}*{\textbf{Oldenburg}}
      &\randsyn   &0.0569 &0.7491 &0.0724 &0.6623 &0.1779 &0.0694 &0.0611 &0.58 &\textbf{0.4313} \\
      &\combtran &0.0168 &0.2794 &0.0593 &0.8108 &0.1052 &0.0371 &\textbf{0.0570} &0.67 &0.5596 \\
      &\noadapt &0.0081 &0.3754 &\textbf{0.0013} &0.8866 &0.0971 &0.0713 &0.0932 &\textbf{0.69} &0.6822 \\
      &\mymethod &\textbf{0.0077} &\textbf{0.2595} &0.0593 &\textbf{0.8944} &\textbf{0.0683} &\textbf{0.0370} &\textbf{0.0570} &\textbf{0.69} &0.5632 \\ \hline
      \multirow{4}*{\textbf{Porto}}
      &\randsyn   &0.2687 &7.4933 &0.2583 &0.2191 &0.4372 &0.1244 &0.1460 &0.39 &0.6658 \\
      &\combtran &0.0243 &0.6741 &0.0464 &0.5784 &0.0932 &0.0412 &0.0351 &0.59 &0.6730 \\
      &\noadapt &0.0098 &0.3607 &0.0464 &0.6828 &0.1052 &0.0579 &0.0631 &\textbf{0.63} &0.7674 \\
      &\mymethod &\textbf{0.0081} &\textbf{0.3312} &\textbf{0.0000} &\textbf{0.7114} &\textbf{0.0778} &\textbf{0.0399} &\textbf{0.0340} &\textbf{0.63} &0.6687 \\ \hline
      \multirow{4}*{\textbf{Hangzhou}}
      &\randsyn   &0.1928 &2.4842 &0.0144 &0.5252 &0.4471 &0.0549 &0.1344 &0.49 &\textbf{0.3821}\\
      &\combtran &0.0361 &0.3426 &0.0464  &0.6492 &0.0762 &0.0040 &0.0594 &0.78 &0.4890 \\
      &\noadapt &0.0322 &0.3641 &\textbf{0.0131} &\textbf{0.7062} &0.0593 &0.0041 &0.0921 &\textbf{0.80} &0.5866 \\
      &\mymethod &\textbf{0.0194} &\textbf{0.2814} &\textbf{0.0131}  &0.7044 &\textbf{0.0513} &\textbf{0.0037} &\textbf{0.0569} &\textbf{0.80} &0.4552 \\ \hline
      \multirow{4}*{\textbf{Campus}}
      &\randsyn   &0.0914 &1.5203 &0.3055 &0.5587 &0.2811 &0.0982 &0.0589  &0.64 &\textbf{0.4109}\\
      &\combtran &0.0103 &0.3025 &0.0304 &0.8539 &0.0862 &\textbf{0.0973} &0.0593  &0.71 &0.5780 \\
      &\noadapt &0.0046 &0.3766 &0.0593 &\textbf{0.8571} &\textbf{0.0502}  &0.1496 &0.0590&\textbf{0.71} &0.6892 \\
      &\mymethod &\textbf{0.0043} &\textbf{0.2011}&\textbf{0.0013}  &\textbf{0.8571} &0.0740 
      &0.0981 &\textbf{0.0587}&\textbf{0.71}  &0.5508 \\ \hline

    \end{tabular}}

\end{table*}

\subsection{Utility Evaluation}\label{sec:exp-utility}

In our first set of experiments, we compare the utility performance of \mymethod and \baseline with various privacy budgets $\epsilon$. For each experiment, we perform the synthesis 5 times, and report the average results in Table~\ref{tab:overall-performance}. Generally speaking, \mymethod outperforms \baseline in all utility metrics across all datasets, which well demonstrates the robustness and strong utility-preserving ability of \mymethod. Based on the in-depth analysis of the results, we have made the following detailed observations.
\begin{itemize}[leftmargin=*]
    \item We first analyze 
    the performance \wrt different types of utility metrics. Since \baseline fails to take the global features of trajectories into consideration, it suffers from severe performance degradation on the global level utility. Especially, the density error in \mymethod is \emph{three times} smaller than that in \baseline, which implies that the trajectories generated by \baseline fail to preserve the geospatial density due to the point-based perturbation. Also, our method shows strong performance in 
    both the trajectory level and the semantic level utilities. We contribute the improvement to the mobility modelling and adaptive synthesis of \mymethod: (1) By collecting transitions from both intra-trajectory and start/end points, \mymethod is able to synthesize trajectories whose mobility patterns (\eg moving directions) 
    assemble users' real traces. In contrast, \baseline aims to preserve the local $n$-gram closeness when perturbing, but ignores the critical sequential features of trajectories in a global view. (2) Benefited from the adaptive synthesis algorithm, \mymethod can generate authentic trajectories with proper length to maintain the relationship between start points and end points, while the point-based \baseline fails to capture the long-term dependence between beginning and terminated transitions.
    \item 
    Next, we evaluate the effectiveness \wrt different datasets. We find that the performance of our framework is robust, which keeps good utility on both synthetic and real-world datasets. However, the performance of \baseline varies across datasets. Specifically, it achieves competitive results in terms of density error and trip error on Oldenburg, but the results of these metrics are much worse in the two real datasets.
    We argue that this is because the mobility patterns of real-world trajectories are much more complex than synthetic ones, and \baseline is unable to capture these patterns with n-gram model.
    \item We also examine the effects \wrt different privacy budgets $\epsilon$. It is worth mentioning that \mymethod could achieve good performance even when the budget is small (\eg $\epsilon=0.5$), which confirms the effectiveness of capturing trajectories' key patterns with only a few distributions. However, \baseline relies on a large budget to achieve reasonable performance, meaning that it needs to relax the privacy guarantee for practical use. For privacy budget $\epsilon$, the protection is acceptable as long as the budget is less than 2~\cite{ccs14rappor,apple, usenix17oue,vldb22query}.
    To align with other real-world deployments of LDP by Google~\cite{ccs14rappor} and Microsoft~\cite{nips17microsoft}, we set $\epsilon=1$ in the following experiments.

\end{itemize}

\subsection{Efficiency Evaluation}\label{sec:exp-efficiency}

Since efficiency is equally important as utility for real-world deployments, we conduct comprehensive experiments to evaluate the average running time of each component, which is detailed in Table~\ref{tab:time-efficiency}. Generally speaking, \mymethod is a very efficient privacy-preserving trajectory publication framework, which is more than \emph{300 times} faster than \baseline. The reason is that \baseline suffers from time-consuming processes like solving linear programming problem and recursive reconstructions, not to mention the expensive cost of pre-processing for POIs and other external knowledge, which greatly hinders its use on mobile devices. We would like to highlight that this observation is consistent with the authors’ claims in their paper~\cite{vldb21ngram}, as they also report up to 4 seconds per generated trajectory. 

As for the breakdown of time spent on each component,  we find that the major computation at user side is the perturbation of intra-trajectory transitions because it has to perturb and upload multiple times to report the holistic transition states. 
Other one-time operations like trajectory length perturbation are much faster, 
consistent with the computational analysis presented in Section~\ref{sec:computationl-cost}.
On the data curator side, the main computation is incurred by the synthesis process, since it needs to generate location points sequentially according to the transition probability.
Overall, \mymethod is highly efficient and practical for real-world applications, and the time cost for privacy protection is nearly imperceptible for users.

\begin{figure*}[t]
    \centering
    \includegraphics[width=0.26\textwidth]{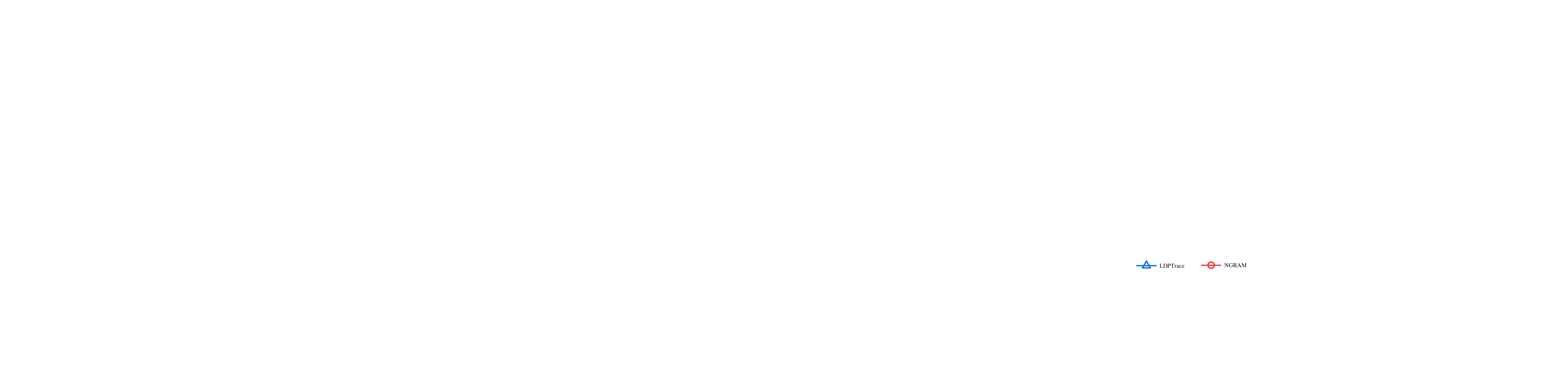}
    \includegraphics[width=0.92\textwidth]{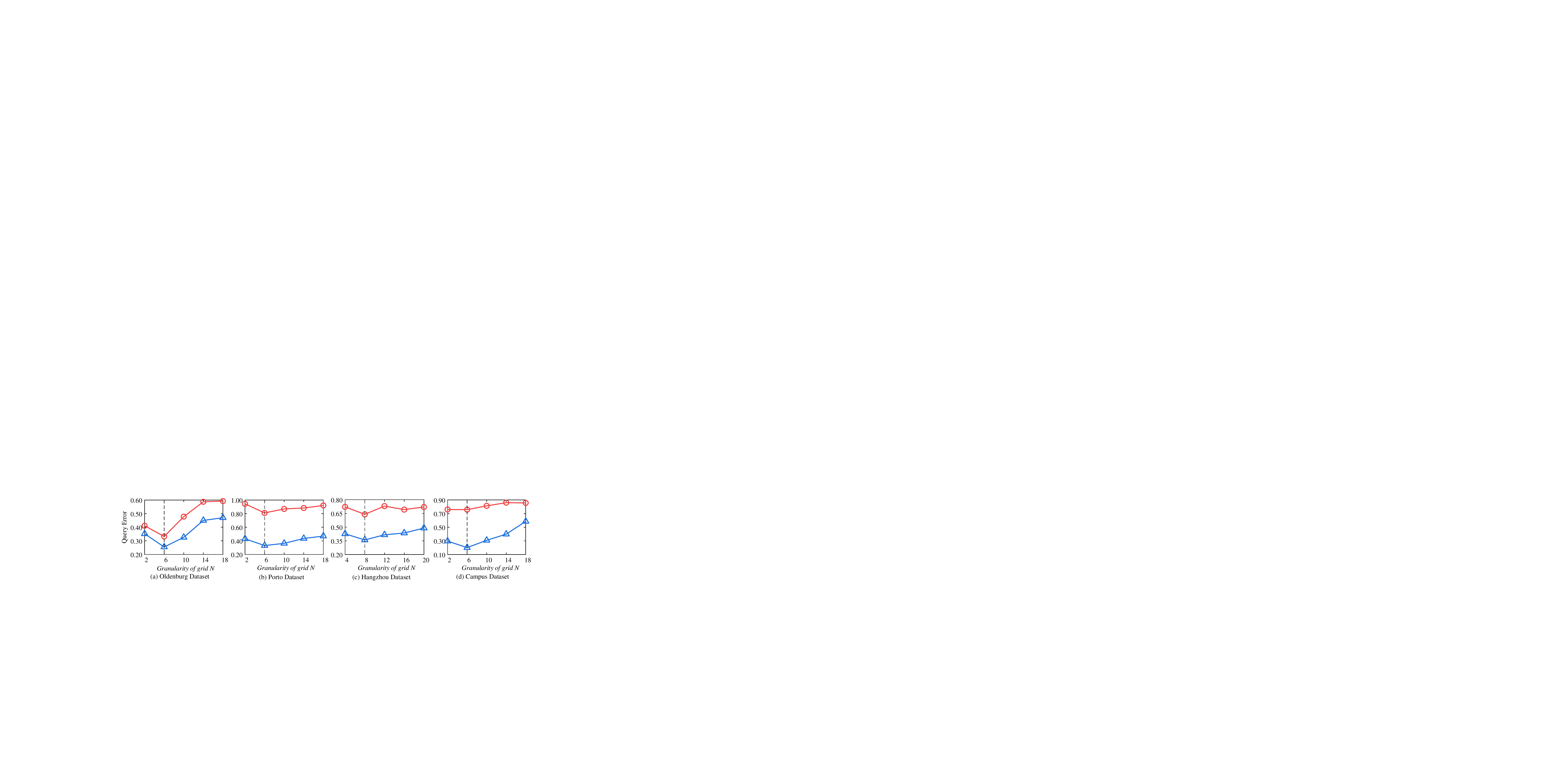}
    \caption{Impact of grid granularity $N$ with optimal $N$ values derived based on Equation~\eqref{equ:choose-granularity} represented by dotted lines.
    }
    \label{fig:granularity}
\end{figure*}

\begin{figure}[t]
    \centering
    \includegraphics[width=0.499\textwidth]{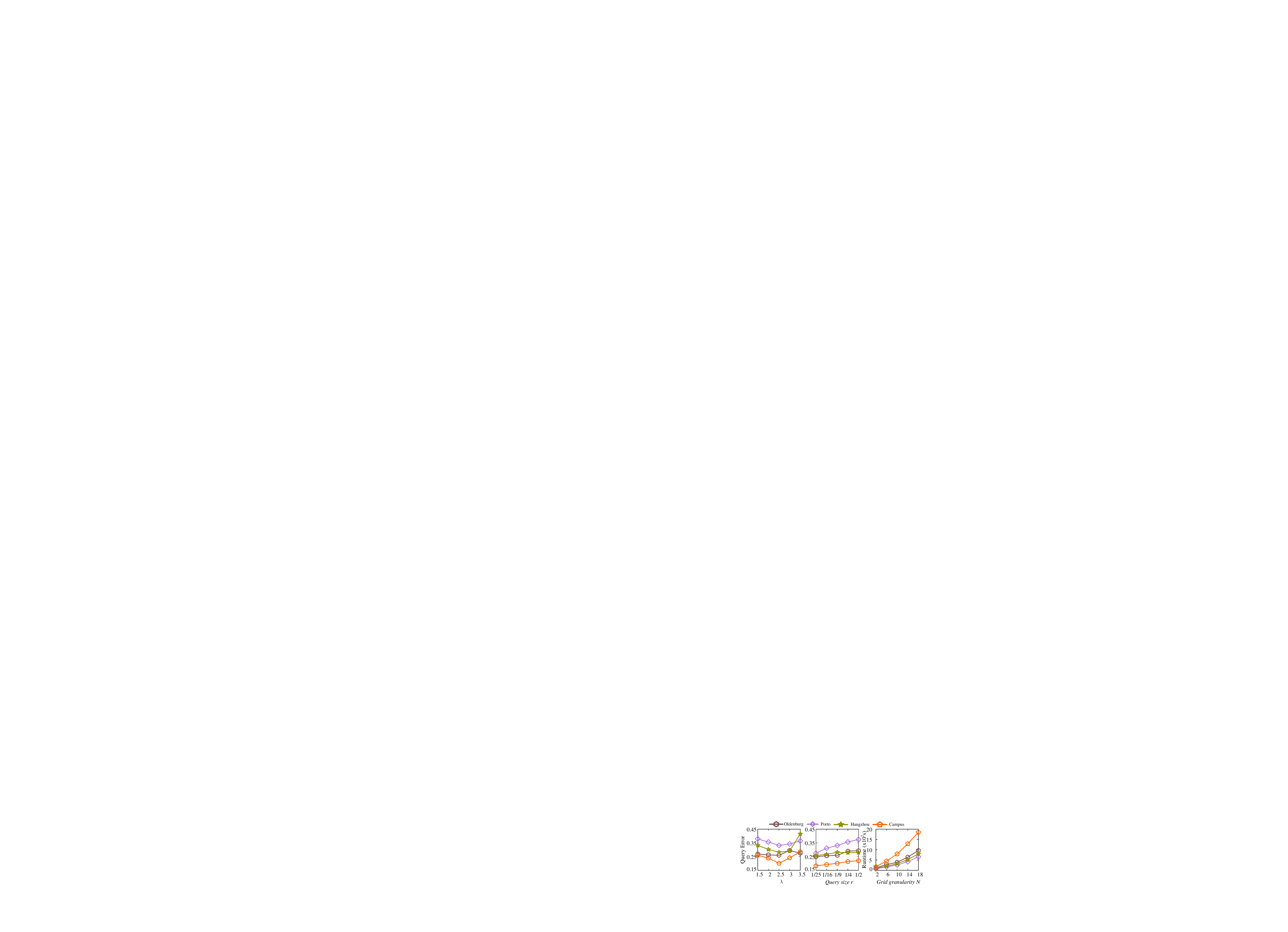}
    \caption{(a) Impact of $\lambda$. (b) Impact of query size $r$. (c) Impact of granularity $N$ on total runtime (seconds).}
    \label{fig:lambda_size_runtime}
\end{figure}


\subsection{Analysis of \mymethod} \label{sec:ablation-study}

As the pattern modelling is at the core of \mymethod, we also conduct insight experiments to investigate its effectiveness, \ie how the presence of beginning/terminated transitions, grid granularity, and the adaptive synthesis algorithm affect our model.

\subsubsection{\textbf{Impact of beginning/terminated transitions.}} We first verify the effectiveness of the beginning/terminated transitions. To this end, we construct three variants of \mymethod, including 
(1) \randsyn that discards the virtual start/end points from \mymethod, and 
synthesizes trajectories according to intra-transition probabilities,
(2) \combtran that combines the beginning/terminated transitions with intra-transitions, and perturbs/reports them as a whole,
and (3) \noadapt that removes the adaptive synthesis strategy (\ie Equation~\eqref{equ:reweight}).
We summarize the results in Table~\ref{tab:biginning/erminated-transitions}.

Compared with the complete model \mymethod, the absence of the beginning/terminated transitions (\ie \randsyn) dramatically degrades the utility, indicating the necessity of modelling the start/end point distribution. However, it is noticed that the pattern error of \randsyn remains small, since it measures the frequent intra-mobility patterns that are less irrelevant with endpoints. \combtran introduces unnecessary noise for estimating beginning/terminated transitions as they will be perturbed along the intra-transitions. Thus, it also results in some utility loss. Last but not the least, directly synthesizing trajectory without considering current length will make the synthetic trajectory too short to represent useful spatial patterns, incurring suboptimal performance.

\subsubsection{\textbf{Impact of grid granularity $N$}} \label{sec:exp-impact-of-grid-granularity}

We analyze the influence of different grid granularity settings to empirically verify the effectiveness of our guideline for choosing $N$. Specifically, we utilize the query error metric to measure the impact of grid granularity. Figure~\ref{fig:granularity} depicts the results, consistent with the analysis presented in Section~\ref{sec:guidance-of-grid}.
We can observe that the performance reaches near optimal at the estimated granularity which is derived from Equation~(\ref{equ:choose-granularity}), \ie 8 for Hangzhou dataset and 6 for the other three datasets.
In addition, we see similar trend with respect to the performance of \baseline. The reason behind is that our estimation only depends on the statistics of trajectory dataset, which is model-agnostic, and can be a good reference to choose the grid granularity for all locally private methods.

\begin{figure}[t]
    \centering
    \includegraphics[width=0.22\textwidth]{figures/legend_model.pdf}
    \includegraphics[width=0.43\textwidth]{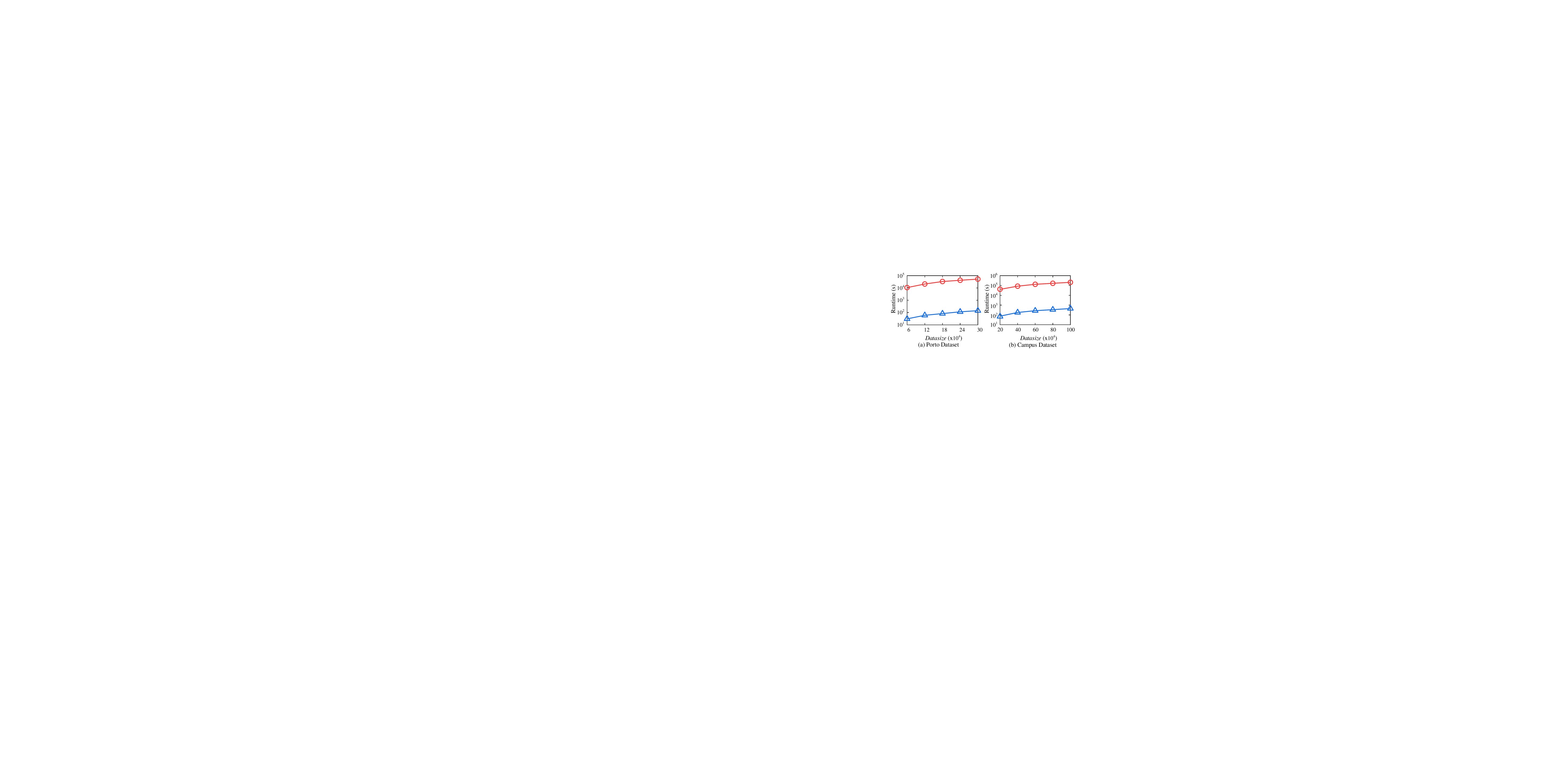}
    \caption{Scalability evaluation.}
    \label{fig:scalability}
\end{figure}

Besides, since the granularity will influence the trajectory length and the number of transition states, we also test the running time of \mymethod under different $N$ values in  Figure~\ref{fig:lambda_size_runtime}(c). There is a clear increasing trend of runtime as $N$ increases its value and each grid becomes more fine-grained.
Consequently, a wrong choice of granularity will not only cause the utility degradation, but also result in larger computational complexity.

\begin{table*}[t]
    \centering
    \caption{Utility performance comparison with unmodified \baseline on Campus dataset.}
    \label{tab:unmodified-ngram}
    \resizebox{0.99\textwidth}{!}{
    \begin{tabular}{c|c|c|c|c|c|c|c|c|c}
    \hline
\textbf{Model}  &\textbf{Density Error} &\textbf{Query Error} &\textbf{HQ Error} &\textbf{Kendall-tau}  &\textbf{Trip Error}  &\textbf{Length Error}  &\textbf{Diameter Error}  &\textbf{Pattern F1}  &\textbf{Pattern Error}  \\ \hline
      \baseline &0.1172 &0.7579 &0.7001 &0.3841 &0.3024 &0.1041 &0.0663 &0.32 &0.7912 \\
      unmodified \baseline &0.0536 &0.4371 &0.8108 &0.6349 &0.2006 &0.1122 &0.0638 &0.55 &0.8215 \\
      \mymethod &\textbf{0.0043} &\textbf{0.2011}&\textbf{0.0013}  &\textbf{0.8571} &\textbf{0.0740} &\textbf{0.0981} &\textbf{0.0587}  &\textbf{0.71}  &\textbf{0.5508} \\ \hline

    \end{tabular}}
\end{table*}

\begin{figure*}[t]
    \centering
    \includegraphics[width=0.55\textwidth]{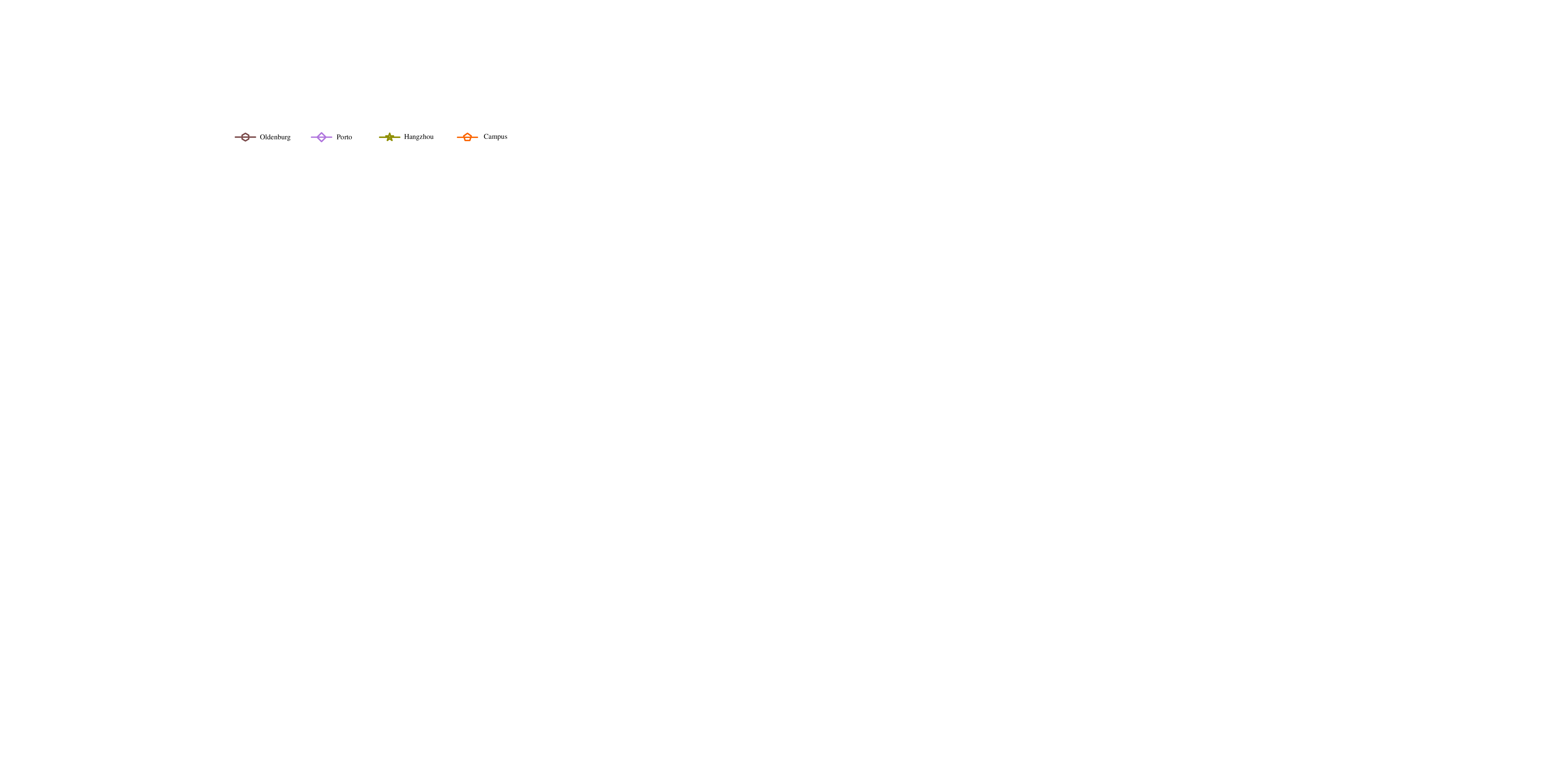}
    \includegraphics[width=0.94\textwidth]{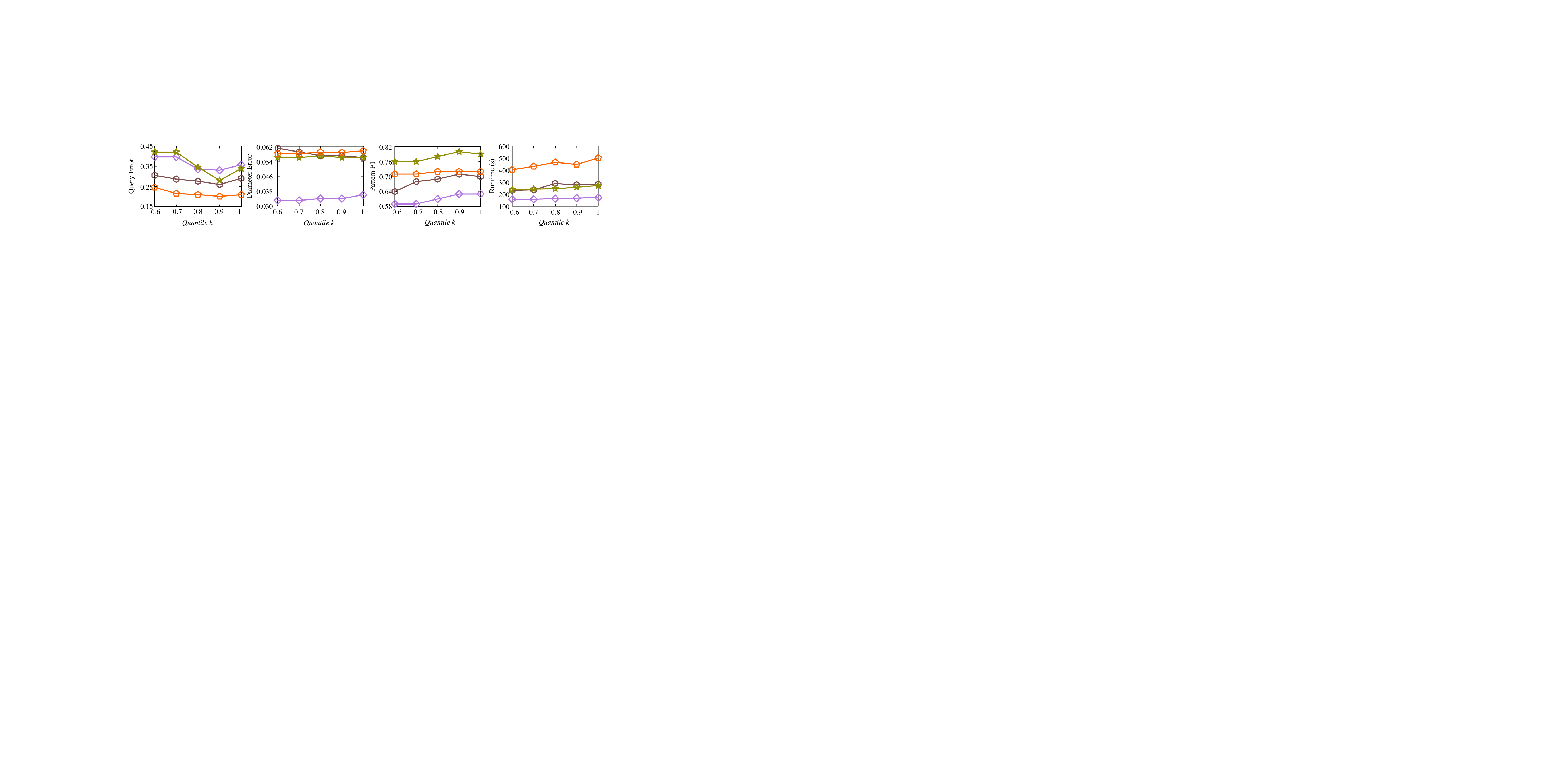}
    \caption{Impact of quantile $k$ of length distribution.}
    \label{fig:variation-k}
\end{figure*}

\subsubsection{\textbf{Impact of quantile $k$}} \label{sec:exp-impact-of-k}

For space reason, we only choose one representative utility metric each from global, trajectory, and semantic levels (\ie query error, diameter error, and pattern F1) to explore the influence of quantile $k$. In general, the selection of trajectory length has different impacts on different utilities. When the quantile $k$ is small, more transitions will be truncated, and a large bias will be introduced to intra-transition modelling, leading to a large query error. On the other hand, when $k$ becomes larger, the budget of each reported transition is smaller, and thus, the added noise increases and the synthetic trajectories may be unreliable, which results in a larger diameter error. In conclusion, we choose $k=0.9$, since it achieves a good trade-off among different utilities.


The choice of $k$ also 
impacts the efficiency of \mymethod. As illustrated in the last subfigure of Figure~\ref{fig:variation-k}, the running time grows plainly with the growth of $k$. It is consistent with our expectation because a larger $k$ requires a longer amount of time for perturbation and reporting.
However, the magnitude of the time increase is insignificant, because our algorithm is super efficient.
Hence, the influence of more perturbations is almost negligible.

\subsubsection{Impact of $\lambda$ and query size $r$.}
We conduct experiments on different $\lambda$ and query size $r$ to further analyze the effectiveness of the granularity selection method. Recall that $\lambda$ is the hyperparameter which depends on the uniformity of the points distribution in the dataset. As shown in Figure~\ref{fig:lambda_size_runtime}(a), we find that $\lambda=2.5$ achieves good performance across different datasets, since it reaches good balance between non-uniformity error and transition estimation error.

As shown in Figure~\ref{fig:lambda_size_runtime}(b), the query error increases
plainly when the query size $r$ becomes larger. There are two competing effects when increasing $r$: on the one hand, the error from more grid cells is aggregated; on the other hand, each query is less affected by noise, since actual counts are larger. The first effect is stronger, so the overall error increases with query size. The results are consistent with prior studies~\cite{vldb22query}. 

\subsection{Scalability}\label{sec:exp-scalability}

We also study the scalability of \mymethod by varying the cardinality of the trajectory datasets. We observe similar trends in all four datasets. Due to space limitation, We only report the total running time under Porto and Campus datasets in Figure~\ref{fig:scalability}.
As observed, \mymethod is consistently faster than \baseline in all different dataset scales, and has \emph{two orders of magnitude} improvement. \mymethod also has stable performance, and the processing time will not grow sharply with the growth of the dataset size. 
For \baseline, it takes more than two days in processing all the trajectories when the dataset is at the scale of millions, while \mymethod can finish the whole process in less than 10 minutes.
Therefore, \mymethod is suitable for large-scale deployment with little computational cost.

\subsection{Comparison with unmodified \baseline}
To further demonstrate the superiority of \mymethod, we also compare it against the unmodified \baseline with external knowledge. Specifically, we use the same Campus dataset described in~\cite{vldb21ngram}, and carefully attach the auxiliary knowledge (POIs, temporal information, hierarchical category) to the original trajectories. The results of unmodified \baseline are reported in Table~\ref{tab:unmodified-ngram}. As observed, \mymethod outperforms both \baseline and unmodified \baseline for all utility metrics by a large margin. Although unmodified \baseline benefits from the deterministic constraints of external knowledge to ensure the semantic similarity between original and perturbed locations, it still fails to capture the moving patterns properly. 





\begin{figure}[t]
    \centering
    \includegraphics[width=0.23\textwidth]{figures/legend_model.pdf}
    \includegraphics[width=0.43\textwidth]{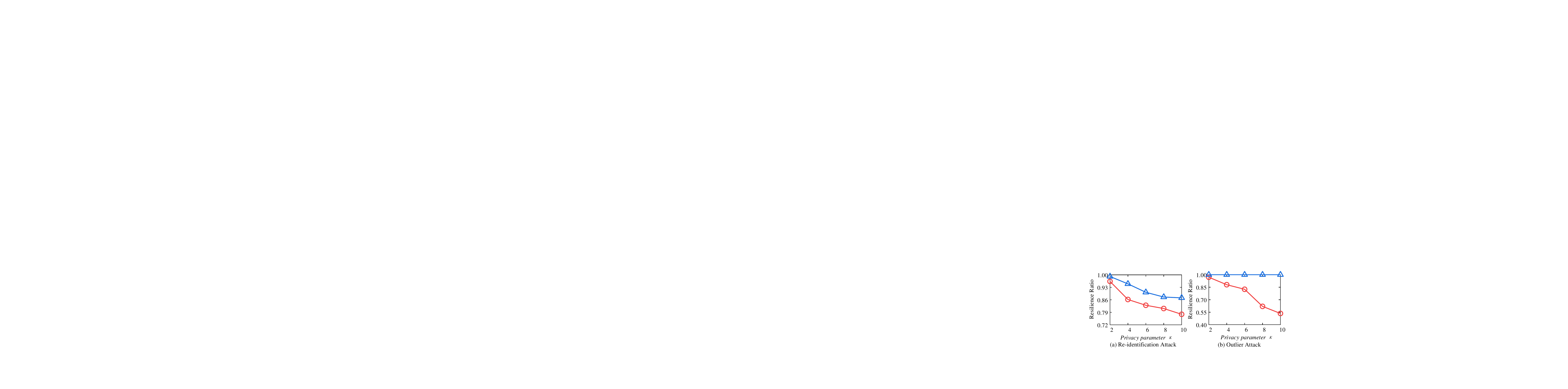}
    \caption{Attack resilience analysis on Oldenburg dataset.}
    \label{fig:attack}
\end{figure}

\subsection{Attack Resilience}\label{sec:exp-attack}
In our last set of experiments, we investigate \mymethod's resistance to two common attacks, which are defined below.

\vspace{3pt}
\noindent
\textbf{Re-identification Attack.}
Suppose the attackers are able to obtain some spatial locations of a user, \eg users can be easily tracked when they enter some public sensitive zones like train stations and shopping malls. With this subtrajectory $T_{a}$ \wrt a specific user $u$ as external knowledge, the attackers aim to identify the individual $u$ in the published dataset $\Set{T}_{syn}$, and acquire information about the attacked user $u$'s whole trajectory. The defense goal is to ensure that there are at least $\kappa$ traces in $\Set{T}_{syn}$ having their similarity distances to $T_a$ below a given threshold $\vartheta$, so that the attacker is unable to identify the true user $u$ even with $u$'s subtrajectory $T_{a}$. Formally, the re-identification attack is defined below:
\begin{definition}[Re-Identification Attack]
Let $M_{T_a}$ denote the set of trajectories in $\Set{T}_{syn}$ that are similar to a given subtrajectory $T_{a}$, $sim(T, T')$ measures the similarity between two trajectories $T$ and $T'$, and $Z$ denote the selected sensitive zones.
{\setlength{\abovedisplayskip}{3pt}
\setlength{\belowdisplayskip}{3pt}
\begin{equation}
    M_{T_a} = \{T|T\in \Set{T}_{syn}, sim(T_a, T\cap Z)\leq \vartheta\}.
\end{equation}}If $|M_a|>\kappa$, the attack is successfully defensed.
\end{definition}

In our experiments, we set $2\times 2$ grids
in the central area of the map as the sensitive zone, use DTW distance as the similarity distance, and set $\vartheta=0.2\times sim_{max}$, where $sim_{max}$ denotes the maximum DTW distance of real trajectories in sensitive zones.

\vspace{3pt}
\noindent
\textbf{Outlier attack.}
In this scenario, the attackers focus on some records that have irregular attributes like outliers, which are relatively far from their neighbors, \eg trajectories with extremely long travel distance or unusual start/end locations that others seldom reach. Hence, it is easier to distinguish them from other normal trajectories, which may cause privacy leakage. We follow~\cite{ccs18adatrace}, and define this attack as follows:
\begin{definition}[Outlier Attack]
    For any outlier $T_o\in \Set{T}_{syn}$, let $M_{T_o}$ denote the matching set that includes all the real trajectories in $\Set{T}$ that are similar to $T_o$, and similarity function $sim(\cdot)$ measures the travel distance difference between two trajectories with $\delta=0.25\times sim_{max}$. If $|M_{T_o}|>\kappa$, the attack is successfully defensed. 
    {\setlength{\abovedisplayskip}{3pt}
    \setlength{\belowdisplayskip}{3pt}
    \begin{equation}
        M_{T_o} = \{T|T\in \Set{T}, sim(T_o, T)\leq \delta\}
    \end{equation}}
\end{definition}
To measure our model's defense ability to the aforementioned attacks, we define\textit{resilience ratio} as follows:
{\setlength{\abovedisplayskip}{3pt}
\setlength{\belowdisplayskip}{3pt}
\begin{equation}
    Resilience\ Ratio=\frac{\sum_{T_a\in D_a}{\mathbbm{1}(|M_{T_a}|>\kappa)}}{|D_a|}
\end{equation}}where $D_a$ is the set of trajectories to be attacked.

The results on Oldenburg dataset are shown in Figure~\ref{fig:attack}, while the results of other datasets are omitted due to similar trend and limited space. The results indicate that \mymethod has impressive ability to resist these two attacks: more than 88\% trajectories can be successfully protected in re-identification attack, and all trajectories are hidden from outliers when the privacy parameter $\kappa$ changes its value from 2 to 10. On the contrary, \baseline cannot provide provable protections to these attacks, especially when the demand of protection is more strict (large privacy parameter $\kappa$). We contribute the superiority of \mymethod to the synthesis design: since the published trajectories are synthesized from learned patterns, they do not resemble any real trajectory. Thus, it is much more difficult for attackers to identify traces that they are interested in.

\section{Conclusions}\label{sec:conclusion}
In this paper, we develop a neat yet effective trajectory synthesis framework under the rigorous privacy of LDP, called \mymethod, which achieves strong utility and efficiency simultaneously. Besides, \mymethod can provide deterministic resilience against common location-based attacks. We also provide a theoretical guideline for selecting the grid granularity without consuming any privacy budgets. Extensive experiments conducted on three datasets demonstrate the superiority of \mymethod. In the near future, we aim to extract more complex patterns from user's trajectory (like second-order Markov chain and average speed) to further enhance the authenticity of synthetic trajectories, 
and to investigate the LDP-based synthesis problem on streaming trajectories to empower real-time location-based applications. 
\appendix
\section{appendix}\label{appendix-proofs}


\renewcommand\thefigure{\Alph{section}\arabic{figure}}
\renewcommand\thetable{\Alph{section}\arabic{table}}    
\setcounter{figure}{0}
\setcounter{table}{0}

\begin{theorem}[Mean and Variance of an OUE Ratio]
\label{theorem-ratio}
Given the unbiased frequency estimations of value $x$ and value $y$ (\ie $\tilde g(x)$ and $\tilde g(y)$) with OUE, Equation~\eqref{equ:mean-f}, and Equation~\eqref{equ:var-gxgy} define the approximated mean and variance, respectively.
{\setlength{\abovedisplayskip}{3pt}
\setlength{\belowdisplayskip}{3pt}
\begin{equation}\label{equ:mean-f}
{\mean}^*[\frac{\tilde g(x)}{\tilde g(y)}] = \frac{f_x}{f_y},
\end{equation}}
{\setlength{\abovedisplayskip}{3pt}
\setlength{\belowdisplayskip}{3pt}
\begin{equation}\label{equ:var-gxgy}
{\var}^*[\frac{\tilde g(x)}{\tilde g(y)}] =
\frac{\left(f_x\right)^2}{\left(f_y\right)^2}\left[\frac{\sigma_x^2}{\left(f_x\right)^2}-2 \frac{\operatorname{Cov}(x, y)}{f_x f_y}+\frac{\sigma_y^2}{\left(f_y\right)^2}\right],
\end{equation}}where $f_x$ and $f_y$ are the true frequencies of $x$ and $y$, respectively; $\sigma_x^2$ and $\sigma_y^2$ are the variance of OUE for estimators $\tilde g(x)$ and $\tilde g(y)$, respectively.
\end{theorem}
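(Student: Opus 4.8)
The statement is a classical "delta method" (first-order Taylor expansion of a ratio) computation, so the plan is to treat $\tilde g(x)$ and $\tilde g(y)$ as random variables concentrated near their means $f_x$ and $f_y$, expand the function $h(u,v)=u/v$ to first order around $(f_x,f_y)$, and then take expectations. First I would recall from Equation~\eqref{equ:oue-variance} that the OUE estimators are unbiased, i.e. $\mean[\tilde g(x)]=f_x$ and $\mean[\tilde g(y)]=f_y$, and that their variances are $\sigma_x^2$ and $\sigma_y^2$; I would also introduce $\operatorname{Cov}(x,y)=\operatorname{Cov}(\tilde g(x),\tilde g(y))$ as the covariance of the two estimators (which is nonzero in general because a single user's report contributes to several counts). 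Writing $\tilde g(x)=f_x+\Delta_x$ and $\tilde g(y)=f_y+\Delta_y$ with $\mean[\Delta_x]=\mean[\Delta_y]=0$, the ratio becomes $\frac{f_x+\Delta_x}{f_y+\Delta_y}=\frac{f_x}{f_y}\bigl(1+\tfrac{\Delta_x}{f_x}\bigr)\bigl(1+\tfrac{\Delta_y}{f_y}\bigr)^{-1}$.

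Next I would expand $\bigl(1+\tfrac{\Delta_y}{f_y}\bigr)^{-1}=1-\tfrac{\Delta_y}{f_y}+\tfrac{\Delta_y^2}{f_y^2}-\cdots$ and multiply out, keeping terms up to second order in the fluctuations $\Delta_x,\Delta_y$ (the precise sense in which higher-order terms are dropped is what makes the result an \emph{approximation}, hence the ${\mean}^*$ and ${\var}^*$ notation). Taking expectations of the first-order expansion gives ${\mean}^*[\tilde g(x)/\tilde g(y)]=\tfrac{f_x}{f_y}$, which is Equation~\eqref{equ:mean-f}; one could optionally carry the second-order term to note a bias of order $\tfrac{f_x}{f_y}\bigl(\tfrac{\sigma_y^2}{f_y^2}-\tfrac{\operatorname{Cov}(x,y)}{f_xf_y}\bigr)$, but to first order the estimator of the ratio is unbiased. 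For the variance, I would use $\var^*[\tilde g(x)/\tilde g(y)]=\mean\bigl[(\tfrac{\tilde g(x)}{\tilde g(y)}-\tfrac{f_x}{f_y})^2\bigr]$ with the linearization $\tfrac{\tilde g(x)}{\tilde g(y)}-\tfrac{f_x}{f_y}\approx \tfrac{f_x}{f_y}\bigl(\tfrac{\Delta_x}{f_x}-\tfrac{\Delta_y}{f_y}\bigr)$, square it, and take expectations term by term: the three resulting terms are $\tfrac{f_x^2}{f_y^2}\cdot\tfrac{\mean[\Delta_x^2]}{f_x^2}=\tfrac{f_x^2}{f_y^2}\cdot\tfrac{\sigma_x^2}{f_x^2}$, the cross term $-2\tfrac{f_x^2}{f_y^2}\cdot\tfrac{\mean[\Delta_x\Delta_y]}{f_xf_y}=-2\tfrac{f_x^2}{f_y^2}\cdot\tfrac{\operatorname{Cov}(x,y)}{f_xf_y}$, and $\tfrac{f_x^2}{f_y^2}\cdot\tfrac{\sigma_y^2}{f_y^2}$, which sum to exactly Equation~\eqref{equ:var-gxgy}.

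The main obstacle here is not the algebra but the justification of the truncation: strictly speaking $\tilde g(y)$ can be small or even nonpositive (it is an adjusted count, not a genuine frequency), so $u/v$ is not globally smooth, and the Taylor expansion is only a heuristic approximation valid in the regime where the coefficient of variation $\sigma_y/f_y$ is small — i.e. when $f_y$ is reasonably large relative to the OUE noise level $\sqrt{n}\,\tfrac{2e^{\epsilon/2}}{e^\epsilon-1}$. I would state this assumption explicitly (large-population / sufficiently-frequent-value regime) so that the remainder terms are genuinely lower order, which is precisely why the quantities are denoted $\mean^*$ and $\var^*$ rather than exact moments. A secondary point worth a sentence is computing $\operatorname{Cov}(\tilde g(x),\tilde g(y))$ itself for OUE: after the linear unbiasing transform of Equation~\eqref{equ:OUE-unbiased} it reduces to the covariance of the raw bit-counts $\hat g(x)$ and $\hat g(y)$, which for distinct $x\ne y$ comes only from the $\tfrac{n}{e^\epsilon+1}\cdot\tfrac{1}{e^\epsilon+1}$-type contributions of users whose true value is neither $x$ nor $y$ (plus the $\tfrac12$-vs-$q$ terms for the at-most-one user at each), and this can be left to the reader or cited from~\cite{usenix17oue}.
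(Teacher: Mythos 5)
Your proposal is correct and follows essentially the same route as the paper's own proof: both are the standard delta-method argument, linearizing $u/v$ around the means $(f_x,f_y)$, reading off the mean from the zeroth-order term, and obtaining the variance by squaring the first-order fluctuation $\frac{f_x}{f_y}\bigl(\frac{\Delta_x}{f_x}-\frac{\Delta_y}{f_y}\bigr)$ and taking expectations. Your added remarks on the validity regime (small coefficient of variation $\sigma_y/f_y$) and on computing $\operatorname{Cov}(\tilde g(x),\tilde g(y))$ go beyond what the paper states but do not change the argument.
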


\begin{proof}[Proof of Theorem~\ref{theorem-ratio}]
For any function $f(X,Y)$, we can choose the expansion point to be $\theta = (\mu_x, \mu_y)$, and the first order Taylor series approximation for $f(X,Y)$ is:
{\setlength{\abovedisplayskip}{3pt}
\setlength{\belowdisplayskip}{3pt}
\begin{equation*}
\begin{aligned}
\mean[f(X, Y)]
& \approx \mean[f(\theta)]+\mean\left[f_x^{\prime}({\theta})\left(X-\mu_x\right)\right]+\mean\left[f_y^{\prime}({\theta})\left(Y-\mu_y\right)\right] \\
&=\mean[f(\theta)]+f_x^{\prime}({\theta}) \mean\left[\left(X-\mu_x\right)\right]+f_y^{\prime}({\theta}) \mean\left[\left(Y-\mu_y\right)\right] \\
&=f\left(\mu_x, \mu_y\right).
\end{aligned}
\end{equation*}}Let $f(x,y) = x/y$, and the approximation holds ${\mean}^*[f(X,Y)] = f(\mu_x, \mu_y) = \mu_x / \mu_y$.
Therefore, the mean of an OUE ratio $\tilde g(x)/\tilde g(y)$ approximates $f_x/f_y$, where $\mean[\tilde g(x)] = f_x$ and $\mean[\tilde g(y)] = f_y$. 

Besides, the variance of $f(X,Y)$ is:
{\setlength{\abovedisplayskip}{3pt}
\setlength{\belowdisplayskip}{3pt}
\begin{equation*}
\begin{aligned}
{\var}[f(X, Y)]=\var\left\{[f(X, Y)-E(f(X, Y))]^2\right\}
 \approx \var\left\{[f(X, Y)-f(\theta)]^2\right\}
\end{aligned}
\end{equation*}}Then using the first order Taylor expansion for $f(X, Y)$ around $\theta$:
{\setlength{\abovedisplayskip}{3pt}
\setlength{\belowdisplayskip}{3pt}
\begin{equation*}
\begin{aligned}
\var[f(X, Y)] & \approx \mean\left\{\left[f(\theta)+f_x^{\prime}({\theta})\left(X-\theta_x\right)+f_y^{\prime}(\theta)\left(Y-\theta_y\right)-f(\theta)\right]^2\right\} \\
&=f_x^{\prime 2}({\theta}) \var(X)+2 f_x^{\prime}({\theta}) f_y^{\prime}(\theta) \operatorname{Cov}(X, Y)+f_y^{\prime 2}({\theta}) \var(Y)
\end{aligned}
\end{equation*}}where $f(x,y) = x/y$, and the approximated variance is:
{\setlength{\abovedisplayskip}{3pt}
\setlength{\belowdisplayskip}{3pt}
\begin{equation*}
\begin{aligned}
{\var}^* [X / Y]
&=\frac{\left(\mu_x\right)^2}{\left(\mu_y\right)^2}\left[\frac{\sigma_x^2}{\left(\mu_x\right)^2}-2 \frac{\operatorname{Cov}(X, Y)}{\mu_x \mu_y}+\frac{\sigma_y^2}{\left(\mu_y\right)^2}\right]
\end{aligned}
\end{equation*}}
\end{proof}


\begin{acks}
This work was supported by the NSFC under Grants No. (62025206, 61972338, and 62102351). Lu Chen is the corresponding author of the work.
\end{acks}


\bibliographystyle{ACM-Reference-Format}
\balance

\bibliography{ref}


\begin{thebibliography}{49}


\ifx \showCODEN    \undefined \def \showCODEN     #1{\unskip}     \fi
\ifx \showDOI      \undefined \def \showDOI       #1{#1}\fi
\ifx \showISBNx    \undefined \def \showISBNx     #1{\unskip}     \fi
\ifx \showISBNxiii \undefined \def \showISBNxiii  #1{\unskip}     \fi
\ifx \showISSN     \undefined \def \showISSN      #1{\unskip}     \fi
\ifx \showLCCN     \undefined \def \showLCCN      #1{\unskip}     \fi
\ifx \shownote     \undefined \def \shownote      #1{#1}          \fi
\ifx \showarticletitle \undefined \def \showarticletitle #1{#1}   \fi
\ifx \showURL      \undefined \def \showURL       {\relax}        \fi
\providecommand\bibfield[2]{#2}
\providecommand\bibinfo[2]{#2}
\providecommand\natexlab[1]{#1}
\providecommand\showeprint[2][]{arXiv:#2}

\bibitem[\protect\citeauthoryear{Acs and Castelluccia}{Acs and
  Castelluccia}{2014}]%
        {kdd14spatio}
\bibfield{author}{\bibinfo{person}{Gergely Acs} {and} \bibinfo{person}{Claude
  Castelluccia}.} \bibinfo{year}{2014}\natexlab{}.
\newblock \showarticletitle{A Case Study: Privacy Preserving Release of
  Spatio-Temporal Density in Paris}. In \bibinfo{booktitle}{\emph{KDD}}.
  \bibinfo{pages}{1679--1688}.
\newblock


\bibitem[\protect\citeauthoryear{Andr{\'e}s, Bordenabe, Chatzikokolakis, and
  Palamidessi}{Andr{\'e}s et~al\mbox{.}}{2013}]%
        {ccs2013geo}
\bibfield{author}{\bibinfo{person}{Miguel~E Andr{\'e}s},
  \bibinfo{person}{Nicol{\'a}s~E Bordenabe}, \bibinfo{person}{Konstantinos
  Chatzikokolakis}, {and} \bibinfo{person}{Catuscia Palamidessi}.}
  \bibinfo{year}{2013}\natexlab{}.
\newblock \showarticletitle{Geo-indistinguishability: Differential privacy for
  location-based systems}. In \bibinfo{booktitle}{\emph{CCS}}.
  \bibinfo{pages}{901--914}.
\newblock


\bibitem[\protect\citeauthoryear{Bassily, Nissim, Stemmer, and
  Thakurta}{Bassily et~al\mbox{.}}{2017}]%
        {nips17hitter}
\bibfield{author}{\bibinfo{person}{Raef Bassily}, \bibinfo{person}{Kobbi
  Nissim}, \bibinfo{person}{Uri Stemmer}, {and} \bibinfo{person}{Abhradeep
  Thakurta}.} \bibinfo{year}{2017}\natexlab{}.
\newblock \showarticletitle{Practical Locally Private Heavy Hitters}. In
  \bibinfo{booktitle}{\emph{NeurIPS}}. \bibinfo{pages}{2285--2293}.
\newblock


\bibitem[\protect\citeauthoryear{Bindschaedler and Shokri}{Bindschaedler and
  Shokri}{2016}]%
        {sp16sglt}
\bibfield{author}{\bibinfo{person}{Vincent Bindschaedler} {and}
  \bibinfo{person}{Reza Shokri}.} \bibinfo{year}{2016}\natexlab{}.
\newblock \showarticletitle{Synthesizing Plausible Privacy-Preserving Location
  Traces}. In \bibinfo{booktitle}{\emph{IEEE Symposium on Security and Privacy
  (SP)}}. \bibinfo{pages}{546--563}.
\newblock


\bibitem[\protect\citeauthoryear{Bonomi and Xiong}{Bonomi and Xiong}{2013}]%
        {cikm13}
\bibfield{author}{\bibinfo{person}{Luca Bonomi} {and} \bibinfo{person}{Li
  Xiong}.} \bibinfo{year}{2013}\natexlab{}.
\newblock \showarticletitle{A two-phase algorithm for mining sequential
  patterns with differential privacy}. In \bibinfo{booktitle}{\emph{CIKM}}.
  \bibinfo{pages}{269--278}.
\newblock


\bibitem[\protect\citeauthoryear{Cai, Lei, Wei, and Xiao}{Cai
  et~al\mbox{.}}{2021}]%
        {vldb21markov}
\bibfield{author}{\bibinfo{person}{Kuntai Cai}, \bibinfo{person}{Xiaoyu Lei},
  \bibinfo{person}{Jianxin Wei}, {and} \bibinfo{person}{Xiaokui Xiao}.}
  \bibinfo{year}{2021}\natexlab{}.
\newblock \showarticletitle{Data synthesis via differentially private markov
  random fields}. In \bibinfo{booktitle}{\emph{VLDB}}.
  \bibinfo{pages}{2190--2202}.
\newblock


\bibitem[\protect\citeauthoryear{Cao, Yoshikawa, Xiao, and Xiong}{Cao
  et~al\mbox{.}}{2017}]%
        {icde17temporal}
\bibfield{author}{\bibinfo{person}{Yang Cao}, \bibinfo{person}{Masatoshi
  Yoshikawa}, \bibinfo{person}{Yonghui Xiao}, {and} \bibinfo{person}{Li
  Xiong}.} \bibinfo{year}{2017}\natexlab{}.
\newblock \showarticletitle{Quantifying Differential Privacy under Temporal
  Correlations}. In \bibinfo{booktitle}{\emph{ICDE}}.
  \bibinfo{pages}{821--832}.
\newblock


\bibitem[\protect\citeauthoryear{Chen, Acs, and Castelluccia}{Chen
  et~al\mbox{.}}{2012a}]%
        {ccs12ngram}
\bibfield{author}{\bibinfo{person}{Rui Chen}, \bibinfo{person}{Gergely Acs},
  {and} \bibinfo{person}{Claude Castelluccia}.}
  \bibinfo{year}{2012}\natexlab{a}.
\newblock \showarticletitle{Differentially private sequential data publication
  via variable-length n-grams}. In \bibinfo{booktitle}{\emph{CCS}}.
  \bibinfo{pages}{638--649}.
\newblock


\bibitem[\protect\citeauthoryear{Chen, Fung, Desai, and Sossou}{Chen
  et~al\mbox{.}}{2012b}]%
        {kdd12ngram}
\bibfield{author}{\bibinfo{person}{Rui Chen}, \bibinfo{person}{Benjamin~CM
  Fung}, \bibinfo{person}{Bipin~C Desai}, {and} \bibinfo{person}{N{\'e}riah~M
  Sossou}.} \bibinfo{year}{2012}\natexlab{b}.
\newblock \showarticletitle{Differentially private transit data publication: a
  case study on the montreal transportation system}. In
  \bibinfo{booktitle}{\emph{KDD}}. \bibinfo{pages}{213--221}.
\newblock


\bibitem[\protect\citeauthoryear{Chen, Fung, Yu, and Desai}{Chen
  et~al\mbox{.}}{2014}]%
        {vldbj14network}
\bibfield{author}{\bibinfo{person}{Rui Chen}, \bibinfo{person}{Benjamin~C.
  Fung}, \bibinfo{person}{Philip~S. Yu}, {and} \bibinfo{person}{Bipin~C.
  Desai}.} \bibinfo{year}{2014}\natexlab{}.
\newblock \showarticletitle{Correlated Network Data Publication via
  Differential Privacy}.
\newblock \bibinfo{journal}{\emph{The VLDB Journal}} \bibinfo{volume}{23},
  \bibinfo{number}{4} (\bibinfo{date}{aug} \bibinfo{year}{2014}),
  \bibinfo{pages}{653--676}.
\newblock


\bibitem[\protect\citeauthoryear{Cormode, Jha, Kulkarni, Li, Srivastava, and
  Wang}{Cormode et~al\mbox{.}}{2018}]%
        {sigmod18ldp}
\bibfield{author}{\bibinfo{person}{Graham Cormode}, \bibinfo{person}{Somesh
  Jha}, \bibinfo{person}{Tejas Kulkarni}, \bibinfo{person}{Ninghui Li},
  \bibinfo{person}{Divesh Srivastava}, {and} \bibinfo{person}{Tianhao Wang}.}
  \bibinfo{year}{2018}\natexlab{}.
\newblock \showarticletitle{Privacy at scale: Local differential privacy in
  practice}. In \bibinfo{booktitle}{\emph{SIGMOD}}.
  \bibinfo{pages}{1655--1658}.
\newblock


\bibitem[\protect\citeauthoryear{Cormode, Maddock, and Maple}{Cormode
  et~al\mbox{.}}{2021}]%
        {cormode2021frequency}
\bibfield{author}{\bibinfo{person}{Graham Cormode}, \bibinfo{person}{Samuel
  Maddock}, {and} \bibinfo{person}{Carsten Maple}.}
  \bibinfo{year}{2021}\natexlab{}.
\newblock \showarticletitle{Frequency estimation under local differential
  privacy}. In \bibinfo{booktitle}{\emph{VLDB}}. \bibinfo{pages}{2046--2058}.
\newblock


\bibitem[\protect\citeauthoryear{Cunningham, Cormode, and
  Ferhatosmanoglu}{Cunningham et~al\mbox{.}}{2021a}]%
        {sstd21}
\bibfield{author}{\bibinfo{person}{Teddy Cunningham}, \bibinfo{person}{Graham
  Cormode}, {and} \bibinfo{person}{Hakan Ferhatosmanoglu}.}
  \bibinfo{year}{2021}\natexlab{a}.
\newblock \showarticletitle{Privacy-Preserving Synthetic Location Data in the
  Real World}. In \bibinfo{booktitle}{\emph{SSTD}}. \bibinfo{pages}{23--33}.
\newblock


\bibitem[\protect\citeauthoryear{Cunningham, Cormode, Ferhatosmanoglu, and
  Srivastava}{Cunningham et~al\mbox{.}}{2021b}]%
        {vldb21ngram}
\bibfield{author}{\bibinfo{person}{Teddy Cunningham}, \bibinfo{person}{Graham
  Cormode}, \bibinfo{person}{Hakan Ferhatosmanoglu}, {and}
  \bibinfo{person}{Divesh Srivastava}.} \bibinfo{year}{2021}\natexlab{b}.
\newblock \showarticletitle{Real-World Trajectory Sharing with Local
  Differential Privacy}. In \bibinfo{booktitle}{\emph{PVLDB}}.
  \bibinfo{pages}{2283--2295}.
\newblock


\bibitem[\protect\citeauthoryear{Ding, Kulkarni, and Yekhanin}{Ding
  et~al\mbox{.}}{2017}]%
        {nips17microsoft}
\bibfield{author}{\bibinfo{person}{Bolin Ding}, \bibinfo{person}{Janardhan
  Kulkarni}, {and} \bibinfo{person}{Sergey Yekhanin}.}
  \bibinfo{year}{2017}\natexlab{}.
\newblock \showarticletitle{Collecting telemetry data privately}. In
  \bibinfo{booktitle}{\emph{NeurIPS}}. \bibinfo{pages}{3574--3583}.
\newblock


\bibitem[\protect\citeauthoryear{Du, Zhang, Bai, Liu, Ji, Cheng, and Chen}{Du
  et~al\mbox{.}}{2021}]%
        {DZBLJCC21}
\bibfield{author}{\bibinfo{person}{Linkang Du}, \bibinfo{person}{Zhikun Zhang},
  \bibinfo{person}{Shaojie Bai}, \bibinfo{person}{Changchang Liu},
  \bibinfo{person}{Shouling Ji}, \bibinfo{person}{Peng Cheng}, {and}
  \bibinfo{person}{Jiming Chen}.} \bibinfo{year}{2021}\natexlab{}.
\newblock \showarticletitle{{AHEAD: Adaptive Hierarchical Decomposition for
  Range Query under Local Differential Privacy}}. In
  \bibinfo{booktitle}{\emph{{CCS}}}. \bibinfo{pages}{1266--1288}.
\newblock


\bibitem[\protect\citeauthoryear{Duchi, Jordan, and Wainwright}{Duchi
  et~al\mbox{.}}{2013}]%
        {13ldp}
\bibfield{author}{\bibinfo{person}{John~C Duchi}, \bibinfo{person}{Michael~I
  Jordan}, {and} \bibinfo{person}{Martin~J Wainwright}.}
  \bibinfo{year}{2013}\natexlab{}.
\newblock \showarticletitle{Local privacy and statistical minimax rates}. In
  \bibinfo{booktitle}{\emph{2013 IEEE 54th Annual Symposium on Foundations of
  Computer Science}}. \bibinfo{pages}{429--438}.
\newblock


\bibitem[\protect\citeauthoryear{Dwork}{Dwork}{2006}]%
        {dwork_dp}
\bibfield{author}{\bibinfo{person}{Cynthia Dwork}.}
  \bibinfo{year}{2006}\natexlab{}.
\newblock \showarticletitle{Differential Privacy}. In
  \bibinfo{booktitle}{\emph{Automata, Languages and Programming}}.
  \bibinfo{pages}{1--12}.
\newblock


\bibitem[\protect\citeauthoryear{Erlingsson, Pihur, and Korolova}{Erlingsson
  et~al\mbox{.}}{2014}]%
        {ccs14rappor}
\bibfield{author}{\bibinfo{person}{{\'U}lfar Erlingsson},
  \bibinfo{person}{Vasyl Pihur}, {and} \bibinfo{person}{Aleksandra Korolova}.}
  \bibinfo{year}{2014}\natexlab{}.
\newblock \showarticletitle{{RAPPOR}: Randomized aggregatable
  privacy-preserving ordinal response}. In \bibinfo{booktitle}{\emph{CCS}}.
  \bibinfo{pages}{1054--1067}.
\newblock


\bibitem[\protect\citeauthoryear{Ge, Mohapatra, He, and Ilyas}{Ge
  et~al\mbox{.}}{2021}]%
        {vldb21kamnio}
\bibfield{author}{\bibinfo{person}{Chang Ge}, \bibinfo{person}{Shubhankar
  Mohapatra}, \bibinfo{person}{Xi He}, {and} \bibinfo{person}{Ihab~F. Ilyas}.}
  \bibinfo{year}{2021}\natexlab{}.
\newblock \showarticletitle{Kamino: Constraint-Aware Differentially Private
  Data Synthesis}. In \bibinfo{booktitle}{\emph{VLDB}}.
  \bibinfo{pages}{1886--1899}.
\newblock


\bibitem[\protect\citeauthoryear{Gu, Li, Cao, and Xiong}{Gu
  et~al\mbox{.}}{2019}]%
        {gu2019supporting}
\bibfield{author}{\bibinfo{person}{Xiaolan Gu}, \bibinfo{person}{Ming Li},
  \bibinfo{person}{Yang Cao}, {and} \bibinfo{person}{Li Xiong}.}
  \bibinfo{year}{2019}\natexlab{}.
\newblock \showarticletitle{Supporting both range queries and frequency
  estimation with local differential privacy}. In
  \bibinfo{booktitle}{\emph{2019 IEEE Conference on Communications and Network
  Security (CNS)}}. \bibinfo{pages}{124--132}.
\newblock


\bibitem[\protect\citeauthoryear{Gursoy, Liu, Truex, and Yu}{Gursoy
  et~al\mbox{.}}{2018a}]%
        {mc18}
\bibfield{author}{\bibinfo{person}{Mehmet~Emre Gursoy}, \bibinfo{person}{Ling
  Liu}, \bibinfo{person}{Stacey Truex}, {and} \bibinfo{person}{Lei Yu}.}
  \bibinfo{year}{2018}\natexlab{a}.
\newblock \showarticletitle{Differentially private and utility preserving
  publication of trajectory data}.
\newblock \bibinfo{journal}{\emph{IEEE Transactions on Mobile Computing}}
  \bibinfo{volume}{18}, \bibinfo{number}{10} (\bibinfo{year}{2018}),
  \bibinfo{pages}{2315--2329}.
\newblock


\bibitem[\protect\citeauthoryear{Gursoy, Liu, Truex, Yu, and Wei}{Gursoy
  et~al\mbox{.}}{2018b}]%
        {ccs18adatrace}
\bibfield{author}{\bibinfo{person}{Mehmet~Emre Gursoy}, \bibinfo{person}{Ling
  Liu}, \bibinfo{person}{Stacey Truex}, \bibinfo{person}{Lei Yu}, {and}
  \bibinfo{person}{Wenqi Wei}.} \bibinfo{year}{2018}\natexlab{b}.
\newblock \showarticletitle{Utility-Aware Synthesis of Differentially Private
  and Attack-Resilient Location Traces}. In \bibinfo{booktitle}{\emph{ACM
  SIGSAC Conference on Computer and Communications Security}}.
  \bibinfo{pages}{196--211}.
\newblock


\bibitem[\protect\citeauthoryear{Gursoy, Rajasekar, and Liu}{Gursoy
  et~al\mbox{.}}{2020}]%
        {gursoy2020utility}
\bibfield{author}{\bibinfo{person}{M~Emre Gursoy}, \bibinfo{person}{Vivekanand
  Rajasekar}, {and} \bibinfo{person}{Ling Liu}.}
  \bibinfo{year}{2020}\natexlab{}.
\newblock \showarticletitle{Utility-Optimized Synthesis of Differentially
  Private Location Traces}. In \bibinfo{booktitle}{\emph{2020 Second IEEE
  International Conference on Trust, Privacy and Security in Intelligent
  Systems and Applications (TPS-ISA)}}. \bibinfo{pages}{30--39}.
\newblock


\bibitem[\protect\citeauthoryear{He, Cormode, Machanavajjhala, Procopiuc, and
  Srivastava}{He et~al\mbox{.}}{2015}]%
        {vldb15dpt}
\bibfield{author}{\bibinfo{person}{Xi He}, \bibinfo{person}{Graham Cormode},
  \bibinfo{person}{Ashwin Machanavajjhala}, \bibinfo{person}{Cecilia~M.
  Procopiuc}, {and} \bibinfo{person}{Divesh Srivastava}.}
  \bibinfo{year}{2015}\natexlab{}.
\newblock \showarticletitle{{DPT}: Differentially Private Trajectory Synthesis
  Using Hierarchical Reference Systems}. In \bibinfo{booktitle}{\emph{PVLDB}}.
  \bibinfo{pages}{2150--8097}.
\newblock


\bibitem[\protect\citeauthoryear{Jiang, Li, Zhao, Zeng, Xiao, and
  Iyengar}{Jiang et~al\mbox{.}}{2021}]%
        {survey21}
\bibfield{author}{\bibinfo{person}{Hongbo Jiang}, \bibinfo{person}{Jie Li},
  \bibinfo{person}{Ping Zhao}, \bibinfo{person}{Fanzi Zeng},
  \bibinfo{person}{Zhu Xiao}, {and} \bibinfo{person}{Arun Iyengar}.}
  \bibinfo{year}{2021}\natexlab{}.
\newblock \showarticletitle{Location Privacy-Preserving Mechanisms in
  Location-Based Services: A Comprehensive Survey}.
\newblock \bibinfo{journal}{\emph{ACM Comput. Surv.}} \bibinfo{volume}{54},
  \bibinfo{number}{1}, Article \bibinfo{articleno}{4} (\bibinfo{year}{2021}),
  \bibinfo{numpages}{36}~pages.
\newblock


\bibitem[\protect\citeauthoryear{Jin, Hua, Francia, Chao, Orlowska, and
  Zhou}{Jin et~al\mbox{.}}{2022a}]%
        {tkde22survey}
\bibfield{author}{\bibinfo{person}{Fengmei Jin}, \bibinfo{person}{Wen Hua},
  \bibinfo{person}{Matteo Francia}, \bibinfo{person}{Pingfu Chao},
  \bibinfo{person}{Maria Orlowska}, {and} \bibinfo{person}{Xiaofang Zhou}.}
  \bibinfo{year}{2022}\natexlab{a}.
\newblock \showarticletitle{A survey and experimental study on
  privacy-preserving trajectory data publishing}.
\newblock \bibinfo{journal}{\emph{TKDE}} (\bibinfo{year}{2022}).
\newblock


\bibitem[\protect\citeauthoryear{Jin, Hua, Ruan, and Zhou}{Jin
  et~al\mbox{.}}{2022b}]%
        {icde22gl}
\bibfield{author}{\bibinfo{person}{Fengmei Jin}, \bibinfo{person}{Wen Hua},
  \bibinfo{person}{Boyu Ruan}, {and} \bibinfo{person}{Xiaofang Zhou}.}
  \bibinfo{year}{2022}\natexlab{b}.
\newblock \showarticletitle{Frequency-based Randomization for Guaranteeing
  Differential Privacy in Spatial Trajectories}. In
  \bibinfo{booktitle}{\emph{ICDE}}. \bibinfo{pages}{1727--1739}.
\newblock


\bibitem[\protect\citeauthoryear{Johnson, Near, and Song}{Johnson
  et~al\mbox{.}}{2018}]%
        {vldb18}
\bibfield{author}{\bibinfo{person}{Noah Johnson}, \bibinfo{person}{Joseph~P
  Near}, {and} \bibinfo{person}{Dawn Song}.} \bibinfo{year}{2018}\natexlab{}.
\newblock \showarticletitle{Towards practical differential privacy for SQL
  queries}. In \bibinfo{booktitle}{\emph{VLDB}}. \bibinfo{pages}{526--539}.
\newblock


\bibitem[\protect\citeauthoryear{Li, Xiong, and Jiang}{Li
  et~al\mbox{.}}{2014}]%
        {edbt14}
\bibfield{author}{\bibinfo{person}{Haoran Li}, \bibinfo{person}{Li Xiong},
  {and} \bibinfo{person}{Xiaoqian Jiang}.} \bibinfo{year}{2014}\natexlab{}.
\newblock \showarticletitle{Differentially private synthesization of
  multi-dimensional data using copula functions}. In
  \bibinfo{booktitle}{\emph{EDBT}}. \bibinfo{pages}{475}.
\newblock


\bibitem[\protect\citeauthoryear{Li, Qardaji, Su, and Cao}{Li
  et~al\mbox{.}}{2012}]%
        {vldb12privbasis}
\bibfield{author}{\bibinfo{person}{Ninghui Li}, \bibinfo{person}{Wahbeh
  Qardaji}, \bibinfo{person}{Dong Su}, {and} \bibinfo{person}{Jianneng Cao}.}
  \bibinfo{year}{2012}\natexlab{}.
\newblock \showarticletitle{PrivBasis: Frequent Itemset Mining with
  Differential Privacy}. In \bibinfo{booktitle}{\emph{VLDB}}.
  \bibinfo{pages}{1340--1351}.
\newblock


\bibitem[\protect\citeauthoryear{Li, Qardaji, Su, Wu, and Yang}{Li
  et~al\mbox{.}}{2013}]%
        {ccs13privacy}
\bibfield{author}{\bibinfo{person}{Ninghui Li}, \bibinfo{person}{Wahbeh
  Qardaji}, \bibinfo{person}{Dong Su}, \bibinfo{person}{Yi Wu}, {and}
  \bibinfo{person}{Weining Yang}.} \bibinfo{year}{2013}\natexlab{}.
\newblock \showarticletitle{Membership Privacy: A Unifying Framework for
  Privacy Definitions}. In \bibinfo{booktitle}{\emph{CCS}}.
  \bibinfo{pages}{889--900}.
\newblock


\bibitem[\protect\citeauthoryear{Li, Wang, Lopuha{\"a}-Zwakenberg, Li, and
  {\v{S}}koric}{Li et~al\mbox{.}}{2020}]%
        {sigmod20ldp}
\bibfield{author}{\bibinfo{person}{Zitao Li}, \bibinfo{person}{Tianhao Wang},
  \bibinfo{person}{Milan Lopuha{\"a}-Zwakenberg}, \bibinfo{person}{Ninghui Li},
  {and} \bibinfo{person}{Boris {\v{S}}koric}.} \bibinfo{year}{2020}\natexlab{}.
\newblock \showarticletitle{Estimating numerical distributions under local
  differential privacy}. In \bibinfo{booktitle}{\emph{SIGMOD}}.
  \bibinfo{pages}{621--635}.
\newblock


\bibitem[\protect\citeauthoryear{Montjoye, Hidalgo, Verleysen, and
  Blondel}{Montjoye et~al\mbox{.}}{2013}]%
        {nature13}
\bibfield{author}{\bibinfo{person}{Yves-Alexandre Montjoye},
  \bibinfo{person}{Cesar Hidalgo}, \bibinfo{person}{Michel Verleysen}, {and}
  \bibinfo{person}{Vincent Blondel}.} \bibinfo{year}{2013}\natexlab{}.
\newblock \showarticletitle{Unique in the Crowd: The Privacy Bounds of Human
  Mobility}.
\newblock \bibinfo{journal}{\emph{Scientific reports}}  \bibinfo{volume}{3}
  (\bibinfo{date}{03} \bibinfo{year}{2013}), \bibinfo{pages}{1376}.
\newblock


\bibitem[\protect\citeauthoryear{Qardaji, Yang, and Li}{Qardaji
  et~al\mbox{.}}{2013}]%
        {icde13grid}
\bibfield{author}{\bibinfo{person}{Wahbeh Qardaji}, \bibinfo{person}{Weining
  Yang}, {and} \bibinfo{person}{Ninghui Li}.} \bibinfo{year}{2013}\natexlab{}.
\newblock \showarticletitle{Differentially private grids for geospatial data}.
  In \bibinfo{booktitle}{\emph{ICDE}}. \bibinfo{pages}{757--768}.
\newblock


\bibitem[\protect\citeauthoryear{Team}{Team}{2017}]%
        {apple}
\bibfield{author}{\bibinfo{person}{Apple Differential~Privacy Team}.}
  \bibinfo{year}{2017}\natexlab{}.
\newblock \showarticletitle{Learning with Privacy at Scale}.
\newblock
\urldef\tempurl%
\url{https://machinelearning.apple.com/research/learning- with- privacy- at-
  scale}
\showURL{%
\tempurl}


\bibitem[\protect\citeauthoryear{Wang, Zhang, Wang, He, Backes, Chen, and
  Zhang}{Wang et~al\mbox{.}}{2023}]%
        {usenix2023privtrace}
\bibfield{author}{\bibinfo{person}{Haiming Wang}, \bibinfo{person}{Zhikun
  Zhang}, \bibinfo{person}{Tianhao Wang}, \bibinfo{person}{Shibo He},
  \bibinfo{person}{Michael Backes}, \bibinfo{person}{Jiming Chen}, {and}
  \bibinfo{person}{Yang Zhang}.} \bibinfo{year}{2023}\natexlab{}.
\newblock \showarticletitle{PrivTrace: Differentially Private Trajectory
  Synthesis by Adaptive Markov Model}. In \bibinfo{booktitle}{\emph{32th USENIX
  Security Symposium (USENIX Security 23)}}.
\newblock


\bibitem[\protect\citeauthoryear{Wang, Xiao, Yang, Zhang, Gu, and Yu}{Wang
  et~al\mbox{.}}{2017b}]%
        {icde17privsuper}
\bibfield{author}{\bibinfo{person}{Ning Wang}, \bibinfo{person}{Xiaokui Xiao},
  \bibinfo{person}{Yin Yang}, \bibinfo{person}{Zhenjie Zhang},
  \bibinfo{person}{Yu Gu}, {and} \bibinfo{person}{Ge Yu}.}
  \bibinfo{year}{2017}\natexlab{b}.
\newblock \showarticletitle{PrivSuper: a superset-first approach to frequent
  itemset mining under differential privacy}. In
  \bibinfo{booktitle}{\emph{ICDE}}. \bibinfo{pages}{809--820}.
\newblock


\bibitem[\protect\citeauthoryear{Wang, Blocki, Li, and Jha}{Wang
  et~al\mbox{.}}{2017a}]%
        {usenix17oue}
\bibfield{author}{\bibinfo{person}{Tianhao Wang}, \bibinfo{person}{Jeremiah
  Blocki}, \bibinfo{person}{Ninghui Li}, {and} \bibinfo{person}{Somesh Jha}.}
  \bibinfo{year}{2017}\natexlab{a}.
\newblock \showarticletitle{Locally differentially private protocols for
  frequency estimation}. In \bibinfo{booktitle}{\emph{26th USENIX Security
  Symposium (USENIX Security 17)}}. \bibinfo{pages}{729--745}.
\newblock


\bibitem[\protect\citeauthoryear{Wang, Chen, Zhang, Su, Cheng, Li, Li, and
  Jha}{Wang et~al\mbox{.}}{2021}]%
        {WCZSCLLJ21}
\bibfield{author}{\bibinfo{person}{Tianhao Wang},
  \bibinfo{person}{Joann~Qiongna Chen}, \bibinfo{person}{Zhikun Zhang},
  \bibinfo{person}{Dong Su}, \bibinfo{person}{Yueqiang Cheng},
  \bibinfo{person}{Zhou Li}, \bibinfo{person}{Ninghui Li}, {and}
  \bibinfo{person}{Somesh Jha}.} \bibinfo{year}{2021}\natexlab{}.
\newblock \showarticletitle{{Continuous Release of Data Streams under both
  Centralized and Local Differential Privacy}}. In
  \bibinfo{booktitle}{\emph{{CCS}}}. \bibinfo{pages}{1237--1253}.
\newblock


\bibitem[\protect\citeauthoryear{Weggenmann and Kerschbaum}{Weggenmann and
  Kerschbaum}{2021}]%
        {ccs21dp}
\bibfield{author}{\bibinfo{person}{Benjamin Weggenmann} {and}
  \bibinfo{person}{Florian Kerschbaum}.} \bibinfo{year}{2021}\natexlab{}.
\newblock \showarticletitle{Differential Privacy for Directional Data}. In
  \bibinfo{booktitle}{\emph{CCS}}. \bibinfo{pages}{1205--1222}.
\newblock


\bibitem[\protect\citeauthoryear{Xiao and Xiong}{Xiao and Xiong}{2015}]%
        {ccs15temporal}
\bibfield{author}{\bibinfo{person}{Yonghui Xiao} {and} \bibinfo{person}{Li
  Xiong}.} \bibinfo{year}{2015}\natexlab{}.
\newblock \showarticletitle{Protecting locations with differential privacy
  under temporal correlations}. In \bibinfo{booktitle}{\emph{ACM SIGSAC
  Conference on Computer and Communications Security}}.
  \bibinfo{pages}{1298--1309}.
\newblock


\bibitem[\protect\citeauthoryear{Xiong, Liu, Li, Cai, Niu, and Del~Rey}{Xiong
  et~al\mbox{.}}{2020}]%
        {survey20ldp}
\bibfield{author}{\bibinfo{person}{Xingxing Xiong}, \bibinfo{person}{Shubo
  Liu}, \bibinfo{person}{Dan Li}, \bibinfo{person}{Zhaohui Cai},
  \bibinfo{person}{Xiaoguang Niu}, {and} \bibinfo{person}{Angel~M. Del~Rey}.}
  \bibinfo{year}{2020}\natexlab{}.
\newblock \showarticletitle{A Comprehensive Survey on Local Differential
  Privacy}.
\newblock \bibinfo{journal}{\emph{Sec. and Commun. Netw.}} (\bibinfo{date}{jan}
  \bibinfo{year}{2020}).
\newblock


\bibitem[\protect\citeauthoryear{Yang, Cheng, Su, Sun, and Chen}{Yang
  et~al\mbox{.}}{2022}]%
        {yang2022collecting}
\bibfield{author}{\bibinfo{person}{Jianyu Yang}, \bibinfo{person}{Xiang Cheng},
  \bibinfo{person}{Sen Su}, \bibinfo{person}{Huizhong Sun}, {and}
  \bibinfo{person}{Changju Chen}.} \bibinfo{year}{2022}\natexlab{}.
\newblock \showarticletitle{Collecting Individual Trajectories under Local
  Differential Privacy}. In \bibinfo{booktitle}{\emph{MDM}}.
  \bibinfo{pages}{99--108}.
\newblock


\bibitem[\protect\citeauthoryear{Yang, Lyu, Zhao, Zhu, and Lam}{Yang
  et~al\mbox{.}}{2020}]%
        {arxiv20ldp}
\bibfield{author}{\bibinfo{person}{Mengmeng Yang}, \bibinfo{person}{Lingjuan
  Lyu}, \bibinfo{person}{Jun Zhao}, \bibinfo{person}{Tianqing Zhu}, {and}
  \bibinfo{person}{Kwok-Yan Lam}.} \bibinfo{year}{2020}\natexlab{}.
\newblock \showarticletitle{Local differential privacy and its applications: A
  comprehensive survey}.
\newblock \bibinfo{journal}{\emph{arXiv preprint arXiv:2008.03686}}
  (\bibinfo{year}{2020}).
\newblock


\bibitem[\protect\citeauthoryear{Yuan, Zhang, Du, Chen, Cheng, and Sun}{Yuan
  et~al\mbox{.}}{2023}]%
        {YZDCCS23}
\bibfield{author}{\bibinfo{person}{Quan Yuan}, \bibinfo{person}{Zhikun Zhang},
  \bibinfo{person}{Linkang Du}, \bibinfo{person}{Min Chen},
  \bibinfo{person}{Peng Cheng}, {and} \bibinfo{person}{Mingyang Sun}.}
  \bibinfo{year}{2023}\natexlab{}.
\newblock \showarticletitle{{PrivGraph: Differentially Private Graph Data
  Publication by Exploiting Community Information}}. In
  \bibinfo{booktitle}{\emph{{USENIX Security}}}.
\newblock


\bibitem[\protect\citeauthoryear{Zeighami, Ahuja, Ghinita, and
  Shahabi}{Zeighami et~al\mbox{.}}{2022}]%
        {vldb22query}
\bibfield{author}{\bibinfo{person}{Sepanta Zeighami}, \bibinfo{person}{Ritesh
  Ahuja}, \bibinfo{person}{Gabriel Ghinita}, {and} \bibinfo{person}{Cyrus
  Shahabi}.} \bibinfo{year}{2022}\natexlab{}.
\newblock \showarticletitle{A Neural Database for Differentially Private
  Spatial Range Queries}. In \bibinfo{booktitle}{\emph{VLDB}}.
  \bibinfo{pages}{1066--1078}.
\newblock


\bibitem[\protect\citeauthoryear{Zhang, Wang, Li, He, and Chen}{Zhang
  et~al\mbox{.}}{2018}]%
        {ccs18clam}
\bibfield{author}{\bibinfo{person}{Zhikun Zhang}, \bibinfo{person}{Tianhao
  Wang}, \bibinfo{person}{Ninghui Li}, \bibinfo{person}{Shibo He}, {and}
  \bibinfo{person}{Jiming Chen}.} \bibinfo{year}{2018}\natexlab{}.
\newblock \showarticletitle{{CALM}: Consistent adaptive local marginal for
  marginal release under local differential privacy}. In
  \bibinfo{booktitle}{\emph{CCS}}. \bibinfo{pages}{212--229}.
\newblock


\bibitem[\protect\citeauthoryear{Zhang, Wang, Li, Honorio, Backes, He, Chen,
  and Zhang}{Zhang et~al\mbox{.}}{2021}]%
        {usenix21privsyn}
\bibfield{author}{\bibinfo{person}{Zhikun Zhang}, \bibinfo{person}{Tianhao
  Wang}, \bibinfo{person}{Ninghui Li}, \bibinfo{person}{Jean Honorio},
  \bibinfo{person}{Michael Backes}, \bibinfo{person}{Shibo He},
  \bibinfo{person}{Jiming Chen}, {and} \bibinfo{person}{Yang Zhang}.}
  \bibinfo{year}{2021}\natexlab{}.
\newblock \showarticletitle{{PrivSyn}: Differentially Private Data Synthesis}.
  In \bibinfo{booktitle}{\emph{30th USENIX Security Symposium (USENIX Security
  21)}}. \bibinfo{pages}{929--946}.
\newblock


\end{thebibliography}
\balance

\end{document}